\newcommand{\eat}[1]{}
\def \network{cognitive radio network\xspace}
\def \prob{\textsc{SpecCon}}
\def \fullprob{\textsc{Spectrum Connectivity}\xspace}
\def \unisat{\textsc{Uniform-SAT}\xspace}
\def \budget{antenna budget\xspace}
\def \specm{spectrum map\xspace}
\def \sm{\textsc{SpecMap}}
\def \specassign{spectrum assignment\xspace}
\def \realgraph{realization graph\xspace}
\def \pgraph{potential graph\xspace}
\def \pgraphs{potential graphs\xspace}
\def \pedge{potential edge\xspace}
\def \tw{\mathsf{tw}}
\renewcommand{\P}{{\rm P}\xspace}
\newcommand{\NP}{{\rm NP}\xspace}
\begin{document}
\title{On the Complexity of Connectivity in Cognitive Radio Networks Through Spectrum Assignment \thanks{A preliminary version of this paper appears in ALGOSENSORS 2012 \cite{algosensor12}. The main extensions are that we give more details of our results and investigate the special case where the potential graphs are of bounded treewidth.
}}

\titlerunning{On the Complexity of Connectivity in CRNs Through Spectrum Assignment}        

\author{Hongyu Liang \and Tiancheng Lou \and Haisheng Tan
\and  Yuexuan Wang \and Dongxiao Yu }
\institute{Hongyu Liang, Haisheng Tan(Corresponding Author), Yuexuan Wang \at
Institute for Theoretical Computer Science, IIIS, Tsinghua University, Beijing, China\\
\email {lianghy08@mails.tsinghua.edu.cn, \{tan,wangyuexuan\}@mail.tsinghua.edu.cn}\\
\and
Tiancheng Lou \at
Google Inc., California, USA \\
\email{tiancheng.lou@gmail.com}
\and
Dongxiao Yu \at Department of Computer Science, the University of Hong
Kong, Hong Kong, China\\
}
\maketitle

\begin{abstract}
Cognitive Radio Networks (CRNs) are considered as a promising
solution to the spectrum shortage problem in wireless communication.
In this paper, we initiate the first systematic study on the algorithmic complexity of the
connectivity problem in CRNs through spectrum assignments. We model
the network of secondary users (SUs) as a potential graph, where two nodes having an edge between them are connected as long as
they choose a common available channel. In the general case, where
the potential graph is arbitrary and the SUs may have different number
of antennae, we prove that it is NP-complete to determine whether
the network is connectable even if there are only two channels. For the
special case where the number of channels is constant and all the SUs
have the same number of antennae, which is more than one but less
than the number of channels, the problem is also NP-complete. For
the special cases in which the potential graph is complete, a tree, or a graph with bounded treewidth, we
prove the problem is NP-complete and fixed-parameter tractable (FPT)
when parameterized by the number of channels. Exact
algorithms are also derived to determine the connectability of a given cognitive radio network.
\end{abstract}

\section{Introduction}

Cognitive Radio is a promising technology to alleviate the spectrum
shortage in wireless communication. It allows the unlicensed
\emph{secondary users} to utilize the temporarily unused licensed
spectrums, referred to as \emph{white spaces}, without interfering
with the licensed \emph{primary users}. Cognitive Radio Networks
(CRNs) is considered as the next generation of communication
networks and attracts numerous research from both academia and
industry recently.

In CRNs, each secondary user (SU) can be equipped with one or
multiple antennae for communication. With multiple antennae, a SU
can communicate on multiple channels simultaneously (in this paper,
channel and spectrum are used interchangeably.). Through spectrum
sensing, each SU has the capacity to measure current available
channels at its site, i.e. the channels are not used by the primary
users (PUs). Due to the appearance of PUs, the available channels of
SUs have the following characteristics~\cite{sigcomm09Bahl}:
\begin{itemize}
\item \emph{Spatial Variation}: SUs at different positions may have
different available channels;
\item \emph{Spectrum Fragmentation}: the available channels of a SU may not
be continuous; and
\item \emph{Temporal Variation}: the available channels of a SU may change
over time.
\end{itemize}

Spectrum assignment is to allocate available channels to SUs to
improve system performance such as spectrum utilization, network
throughput and fairness. Spectrum assignment is one of the most
challenging problems in CRNs and has been extensively
studied such as in~\cite{infocom12Li,suveryWang,mobihoc12Huang,mobihoc07Yuan,auctionZhou}.

Connectivity is a fundamental problem in wireless communication.
Connection between two nodes in CRNs is not only determined by their
distance and their transmission powers, but also related to whether
the two nodes has chosen a common channel. Due to the spectrum
dynamics, communication in CRNs is more difficult than in the
traditional multi-channel radio networks studied in~\cite{discDolev}. Authors
in ~\cite{Infocom12Lu,CoRoNetRen,JsacRen} investigated the impact of
different parameters on connectivity in large-scale CRNs, such as
the number of channels, the activity of PUs, the number of neighbors
of SUs and the transmission power.

\eat{
\begin{table}[t]
\begin{minipage}[b]{\textwidth}
\centering
\begin{tabular}{|c|c|c|c|c|}
\hline
& \multicolumn{3}{c|}{Quasi-identifiers} & Sensitive\\
\hline
& Zipcode & Age& Education&Disease\\
\hline
1 & 98765 & 38 & Bachelor & Viral Infection\\
2 & 98654 & 39 & Doctorate & Heart Disease\\
3 & 98543 & 32 & Master & Heart Disease\\
4 & 97654 & 65 & Bachelor & Cancer\\
5 & 96689 & 45 & Bachelor & Viral Infection\\
6 & 97427 & 33 & Bachelor & Viral Infection\\
7 & 96552 & 54 & Bachelor & Heart Disease\\
8 & 97017 & 69 & Doctorate & Cancer\\
9 & 97023 & 55 & Master & Cancer\\
10 & 97009 & 62 & Bachelor & Cancer\\
\hline
\end{tabular}
\vspace{1mm}
\caption{The raw microdata table.}\label{tab:1}
\end{minipage}
\begin{minipage}[b]{0.5\textwidth}
\centering
\small
\begin{tabular}{|c|c|c|c|c|}
\hline
& \multicolumn{3}{c|}{Quasi-identifiers} & Sensitive\\
\hline
& Zipcode & Age& Education&Disease\\
\hline
1 & 98$\star$$\star$$\star$ & 3$\star$ & $\star$ & Viral Infection\\
2 & 98$\star$$\star$$\star$ & 3$\star$ & $\star$ & Heart Disease\\
3 & 98$\star$$\star$$\star$ & 3$\star$ & $\star$ & Heart Disease\\
\hline
4 & 9$\star$$\star$$\star$$\star$ & $\star$$\star$ & Bachelor & Cancer\\
5 & 9$\star$$\star$$\star$$\star$ & $\star$$\star$ & Bachelor & Viral Infection\\
6 & 9$\star$$\star$$\star$$\star$ & $\star$$\star$ & Bachelor & Viral Infection\\
7 & 9$\star$$\star$$\star$$\star$ & $\star$$\star$ & Bachelor & Heart Disease\\
\hline
8 & 970$\star$$\star$ & $\star$$\star$ & $\star$ & Cancer\\
9 & 970$\star$$\star$ & $\star$$\star$ & $\star$ & Cancer\\
10 & 970$\star$$\star$ & $\star$$\star$ & $\star$ & Cancer\\
\hline
\end{tabular}
\vspace{1mm}
\caption{A 3-anonymous partition.}\label{tab:2}
\normalsize
\end{minipage}
\begin{minipage}[b]{0.5\textwidth}
\centering
\begin{tabular}{|c|c|c|c|c|}
\hline
& \multicolumn{3}{c|}{Quasi-identifiers} & Sensitive\\
\hline
& Zipcode & Age& Education&Disease\\
\hline
1 & 98$\star$$\star$$\star$ & 3$\star$ & $\star$ & Viral Infection\\
2 & 98$\star$$\star$$\star$ & 3$\star$ & $\star$ & Heart Disease\\
\hline
3 & 9$\star$$\star$$\star$$\star$ & $\star$$\star$ & $\star$ & Heart Disease\\
5 & 9$\star$$\star$$\star$$\star$ & $\star$$\star$ & $\star$ & Viral Infection\\
8 & 9$\star$$\star$$\star$$\star$ & $\star$$\star$ & $\star$ & Cancer\\
9 & 9$\star$$\star$$\star$$\star$ & $\star$$\star$ & $\star$ & Cancer\\
\hline
4 & 97$\star$$\star$$\star$ & $\star$$\star$ & Bachelor & Cancer\\
6 & 97$\star$$\star$$\star$ & $\star$$\star$ & Bachelor & Viral Infection\\
\hline
7 & 9$\star$$\star$$\star$$\star$ & $\star$$\star$ & Bachelor & Heart Disease\\
10 & 9$\star$$\star$$\star$$\star$ & $\star$$\star$ & Bachelor & Cancer\\
\hline
\end{tabular}
\vspace{1mm}
\caption{A 2-diverse partition.}\label{tab:3}
\end{minipage}
\vspace{-10mm}
\end{table}
}

\begin{figure}[t]
\begin{minipage}[b]{0.5\textwidth}
\centering
        \includegraphics[width=0.9\textwidth]{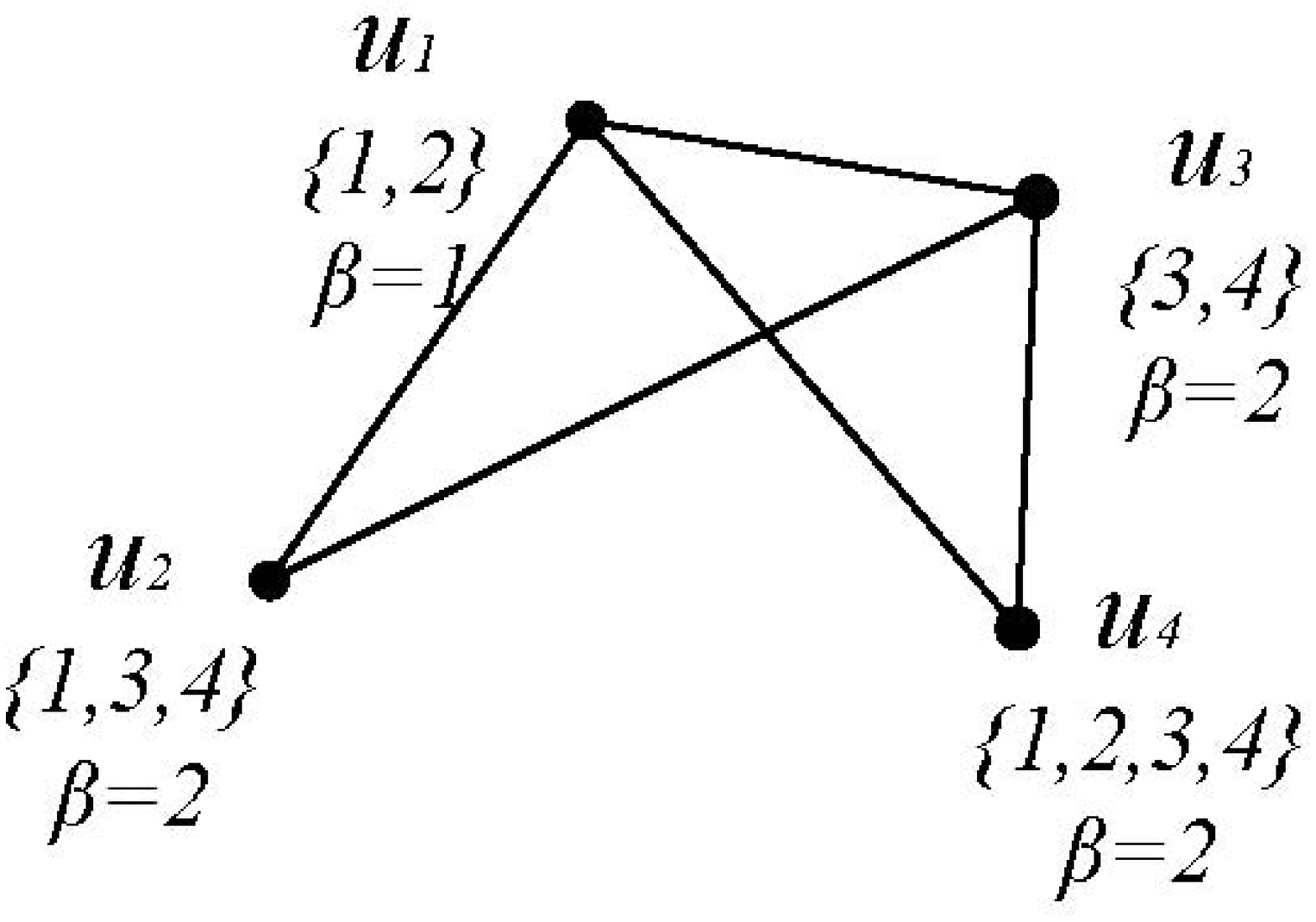}     
\end{minipage}
\begin{minipage}[b]{0.5\textwidth}
\centering
        \includegraphics[width=0.9\textwidth]{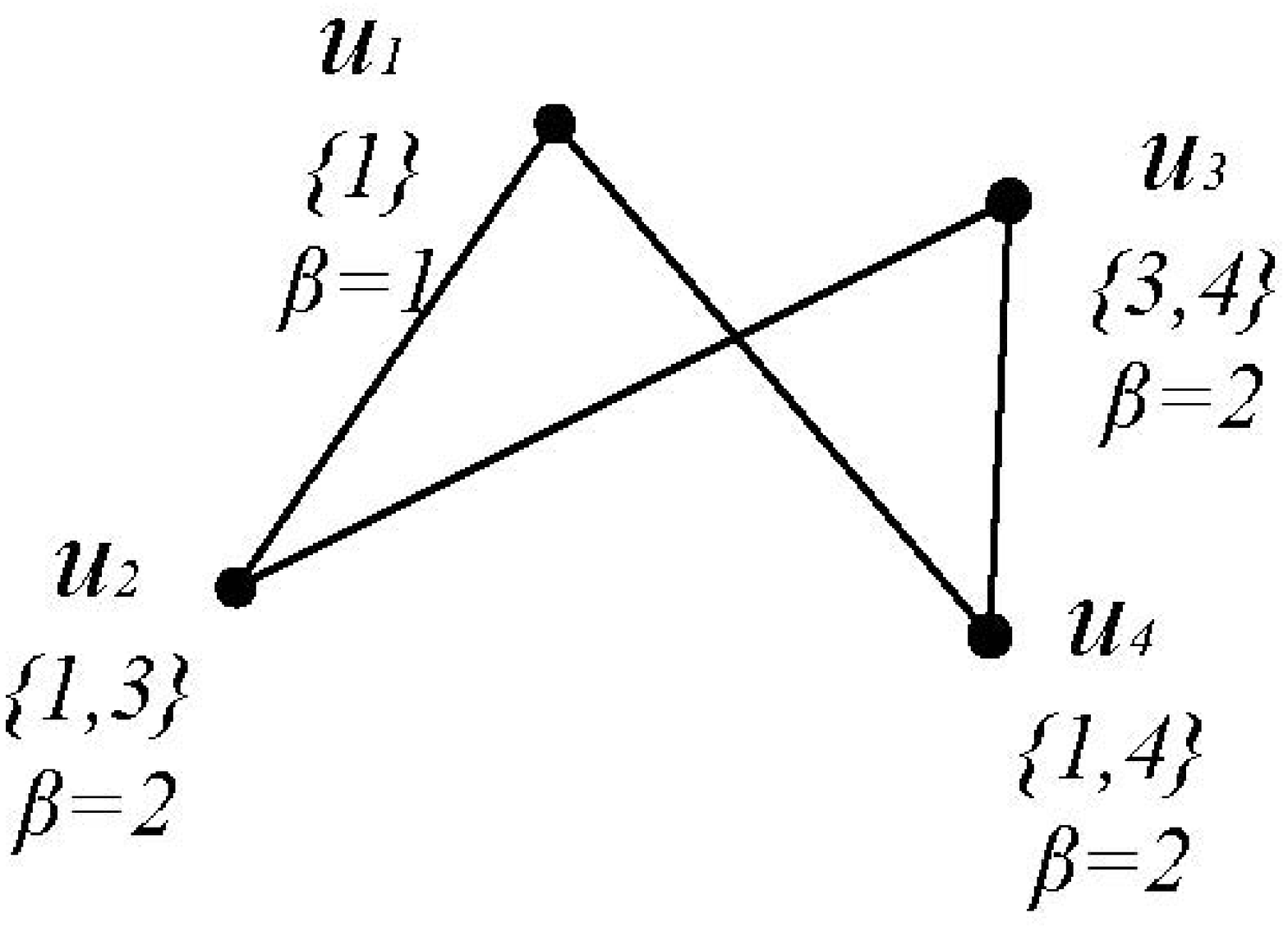}
\end{minipage}
 \caption{the general case. a) the potential graph: the set besides each SU is its available channels, and $\beta$ is its number of antennae.  $u_2$ and $u_4$ are not connected directly because they are a pair of heterogenous nodes or their distance exceeds at least one of their transmission ranges. b) the realization graph which is connected: the set beside each SU is the channels assigned to it. }
\label{fig:general}
\end{figure}

In this paper, we initiate the first systematic study on the complexity of connectivity in CRNs
through spectrum assignment. We model the network as a potential
graph and a realized graph before and after spectrum assignment
respectively (refer to Section~\ref{sec:def}).  We start from the
most general case, where the network is composed of heterogenous
SUs\footnote{We assume two heterogenous SUs cannot communicate even
when they work on a common channel and their distance is within
their transmission ranges.}, SUs may be equipped with different number
of antennae and the potential graph can be arbitrary
(Figure~\ref{fig:general}). Then, we proceed to study the special
case when all the SUs have the same number of antennae. If all the
SUs are homogenous with transmission ranges large enough, the
potential graph will be a complete graph. For some hierarchically
organized networks, e.g. a set of SUs are connected to an access
point, the potential graph can be a tree. Therefore, we also study
these special cases. Exact algorithms are also derived to determine
connectivity for different cases. Our results are listed below. To
the best of knowledge, this is the first work that systematically
studies the algorithmic complexity of connectivity in CRNs with
multiple antennae.

\paragraph{Our Contributions:} In this paper we study the algorithmic complexity of the  connectivity problem through spectrum assignment under different models. Our main results are as follows.
\begin{itemize}
\item When the potential graph is a general graph, we prove that the problem is NP-complete even if there are only two channels. This result is sharp as the problem is polynomial-time
solvable when there is only one channel. We also design exact
algorithms for the problem. For the special case when all SUs have
the same number of antennae, we prove that the problem is
NP-complete when $k>\beta\geq 2$, where $k$ and $\beta$ are the
total amount of channels in the white spaces and the number of
antennae on an SU respectively.

\item When the potential graph is complete,\footnote{The complete graph is a special case of disk graphs, which are commonly used to model wireless networks such as in~\cite{wireless08Kuhn,jsac12Du}.} the problem is shown to be NP-complete even if each node can open at most two channels. However, in contrast to the general case, the problem is shown to be polynomial-time solvable if the number of channels is fixed. In fact, we prove a stronger result saying that the problem is fixed parameter tractable when parameterized by the number of channels.
    (See \cite{book_para} for notations in parameterized complexity.)

\item When the potential graph is a tree, we prove that the problem is NP-complete even if the tree has depth one. Similar to the complete graph case, we show that the problem is fixed parameter tractable when parameterized by the number of channels. We then generalize this result, showing that the problem remains fixed parameter tractable when parameterized by the number of channels if the underlying potential graph has bounded treewidth.
\end{itemize}

\paragraph{Paper Organization:} In Section~\ref{sec:def} we formally define our model and problems studied in this paper. We study the problem with arbitrary potential graphs in Section~\ref{sec:general}. The special cases where the potential graph is complete or a tree are investigated in Sections~\ref{sec:complete} and \ref{sec:trees}. The paper is concluded in Section~\ref{sec:con} with possible future works.

\section{Preliminaries}\label{sec:def}
\subsection{System Model and Problem Definition}\label{subsec:def}
We first describe the model used throughout this paper. A
\emph{\network~}is comprised of the following ingredients:
\begin{itemize}
\item $U$ is a collection of secondary users (SUs) and $C$ is the set of channels in the white spaces.
\item Each SU $u\in U$ has a \emph{\specm,} denoted by \sm($u$), which is a subset of $C$ representing the available channels that $u$ can open.
\item The \emph{\pgraph~}$\mathcal{PG}=(U,E)$, where each edge of $E$ is also called a
\emph{potential edge}. If two nodes are connected by a potential
edge, they can communicate as long as they choose a common available
channel.
\item Each SU $u \in U$ is equipped with a number of antennas, denoted as \emph{\budget~}$\beta(u)$, which is the maximum number of channels that $u$ can open simultaneously.
\end{itemize}

For a set $S$, let $2^{S}$ denote the power set of $S$, i.e., the collection of all subsets of $S$.
A \emph{\specassign~}is a function $\mathcal{SA}: U \rightarrow 2^{C}$ satisfying that
$$\mathcal{SA}(u) \subseteq \sm(u) \textrm{~and~} |\mathcal{SA}(u)| \leq \beta(u) \textrm{~for all~} u\in U.$$
Equivalently, a \specassign is a way of SUs opening channels such
that each SU opens at most $\beta$ channels and can only open those in its \specm.

Given a \specassign $\mathcal{SA}$, a \pedge $\{u,v\}\in E$ is called \emph{realized~}if $\mathcal{SA}(u)\cap \mathcal{SA}(v)\neq \emptyset$, i.e., there exists a channel opened by both $u$ and $v$. The \emph{\realgraph~}under a \specassign is a graph $\mathcal{RG}=(U,E')$, where $E'$ is the set of realized edges in $E$. Note that $\mathcal{RG}$ is a spanning subgraph of the \pgraph $\mathcal{PG}$. A \network is called \emph{connectable~}if there exists a \specassign under which the \realgraph is connected, in which case we also say that the \network is \emph{connected} under this \specassign.
Now we can formalize the problems studied in this paper.

\vspace{1.5mm}
\noindent\textbf{The \fullprob Problem}.
The \fullprob problem is to decide whether a given \network is connectable.

\vspace{1.5mm} We are also interested in the special case where the
number of possible channels is small\footnote{Commonly, the white
spaces include spectrums from channel 21 (512Mhz) to 51 (698Mhz)
excluding channel 37, which is totally 29
channels~\cite{sigcomm09Bahl}.} and SUs have the same antenna
budget. Therefore, we define the following subproblem of the
\fullprob problem:

\vspace{1.5mm} \noindent\textbf{The Spectrum
$(k,\beta)$-Connectivity Problem}. For two constants $k,\beta\geq
1$, the \textsc{Spectrum} $(k,\beta)$-\textsc{Connectivity} problem
is to decide whether a given \network with $k$ channels in which all
SUs have the same budget $\beta$ is connectable. For convenience we
write \prob($k,\beta$) to represent this problem.

\vspace{1.5mm} Finally, we also consider the problem with special
kinds of \pgraphs, i.e. the potential graph is complete or a tree.

In the sequel, unless otherwise stated, we always use $n:=|U|$ and $k:=|C|$ to denote respectively the number of secondary users and channels.

\subsection{Tree Decomposition}\label{subsec:treedecomp}
In this subsection we give some basic notions regarding the tree decomposition of a graph, which will be used later.
The concept of treewidth was introduced by Robertson and Seymour in their seminal work on graph minors \cite{rs_ja86}. A {\em tree decomposition} of a graph $G=(V,E)$ is given by a tuple
$(T=(I,F), \{X_i~|~i\in I\})$, where $T$ is a tree and each $X_i$ is a subset
of $V$ called a {\em bag} satisfying that
\begin{itemize}
\item $\bigcup_{i\in I} X_i = V$;
\item For each edge $\{u,v\}\in E$, there exists a tree node $i$ with
  $\{u,v\} \subseteq X_i$;
\item For each vertex $u\in V$, the set of tree nodes $\{i\in I~|~u\in
  X_i\}$ forms a connected subtree of $T$. Equivalently, for any three
  vertices $t_1,t_2,t_3 \in I$ such that $t_2$ lies in the path
  from $t_1$ to $t_3$, it holds that $X_{t_1} \cap X_{t_3} \subseteq
  X_{t_2}$.
\end{itemize}

The {\em width} of the tree decomposition is $\max_{i\in I}
\{|X_i|-1\}$, and the {\em treewidth} of a graph $G$ is the
minimum width of a tree decomposition of $G$. For each fixed integer $d$, there is a polynomial time algorithm that decides whether a given graph has treewidth at most $d$, and if so, constructs a tree decomposition of width $d$ \cite{bod_sjc96}. Such a decomposition can easily be transformed to a \emph{nice tree decomposition} $(T,\{X_i\})$ of $G$ with the same width, in which $T$ is a rooted binary tree with at most $O(|V|)$ nodes (see e.g. \cite{kloks94}).

\section{The \fullprob Problem}\label{sec:general}
In this section, we study the the \fullprob problem from both
complexity and algorithmic points of view.

\subsection{NP-completeness Results}
We show that the \fullprob problem is NP-complete even if the number of channels is fixed. In fact we give a complete characterization of the complexity of \prob($k,\beta$) by proving the following dichotomy result:

\begin{theorem}\label{thm:general}
\prob($k,\beta$) is \NP-complete for any integers $k>\beta\geq 2$,
and is in \P if $\beta=1$ or $k\leq \beta$.
\end{theorem}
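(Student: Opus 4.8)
\emph{The tractable side.} Membership in \NP\ is immediate: a \specassign is a polynomial-size certificate, and verifying that it is legal and that the induced \realgraph is connected takes polynomial time. So for the \P-part the plan is to give polynomial algorithms when $\beta=1$ or $k\le\beta$. When $k\le\beta$, every node $u$ can afford to open all of $\sm(u)$; since $\mathcal{SA}(u)\cap\mathcal{SA}(v)\subseteq\sm(u)\cap\sm(v)$ for every \specassign, opening the whole \specm simultaneously at all nodes yields the inclusion-wise largest possible \realgraph, so the instance is connectable iff the graph on $U$ with edge set $\{\,\{u,v\}\in E:\sm(u)\cap\sm(v)\neq\emptyset\,\}$ is connected, a single connectivity test. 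When $\beta=1$ and $n\ge 2$, a connected \realgraph has no isolated vertex, so every node opens exactly one channel and the two endpoints of any realized edge open the same channel; connectivity of $\mathcal{RG}$ then forces all nodes to open one common channel $c$, whence $\mathcal{RG}=\mathcal{PG}$. Thus the instance is connectable iff $\mathcal{PG}$ is connected and $\bigcap_{u\in U}\sm(u)\neq\emptyset$ (with $n\le 1$ trivial), again a polynomial check.

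\emph{The reduction.} For \NP-hardness the plan is: first prove \prob($\beta+1,\beta$) is \NP-hard for each fixed $\beta\ge 2$ by a reduction from \textsc{3-Sat} (assuming without loss of generality that each clause uses three distinct variables), then reduce \prob($\beta+1,\beta$) to \prob($k,\beta$) for every $k>\beta+1$ by adjoining $k-\beta-1$ extra channels that lie in no \specm (they are never opened, so the instances are equivalent). Given a \textsc{3-Sat} formula with variables $x_1,\dots,x_n$ and clauses $C_1,\dots,C_m$, I would take channel set $\{1,\dots,\beta+1\}$ and build: a hub $b$ with $\sm(b)=\{1\}$; for each $x_i$ a node $v_i$ with $\sm(v_i)=\{1,\dots,\beta+1\}$ joined to $b$, plus $\beta-1$ pendants $p_{i,1},\dots,p_{i,\beta-1}$ with $\sm(p_{i,\ell})=\{\ell\}$, each adjacent only to $v_i$; for each $C_j$ a node $c_j$ with $\sm(c_j)=\{\beta,\beta+1\}$; and for each occurrence of a literal of $x_i$ in $C_j$ a node $a_{j,i}$ adjacent to both $v_i$ and $c_j$, with $\sm(a_{j,i})=\{\beta\}$ if the literal is positive and $\{\beta+1\}$ if it is negative. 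The intended reading is: the pendants force each $v_i$ to open $\{1,\dots,\beta-1\}$, so its single remaining antenna opens exactly one of channel $\beta$ (meaning $x_i$ true) or channel $\beta+1$ (meaning $x_i$ false); channel $1$ ties all $v_i$ and their pendants to $b$ into one ``core''; each clause gadget $\{c_j\}\cup\{a_{j,i}\}$ is internally connected through $c_j$ (whose assignment $\{\beta,\beta+1\}$ meets every $a_{j,i}$), while the only edges it sends outside are the $\{a_{j,i},v_i\}$, and such an edge is realized exactly when the corresponding literal is satisfied.

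\emph{Correctness and the crux.} In the forward direction a satisfying assignment dictates all the $v_i$, the core is connected, and each clause has a satisfied literal, so each clause gadget attaches to the core (its unsatisfied-literal nodes riding along via $c_j$), giving a connected \realgraph. In the reverse direction I would first invoke the standard monotonicity fact — enlarging a node's opened set never destroys a realized edge, so we may assume every node opens $\min\{\beta,|\sm(u)|\}$ channels — to pin $b$, the $p_{i,\ell}$, the $a_{j,i}$ and the $c_j$ down exactly and, since the pendants must be connected, to force each $v_i$ to open $\{1,\dots,\beta-1\}$ together with exactly one of $\{\beta,\beta+1\}$, yielding a well-defined truth assignment; then, since the only edges leaving a clause gadget are the $\{a_{j,i},v_i\}$, connectivity of the network forces one of them to be realized for every $j$, i.e.\ every clause gets a satisfied literal. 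I expect the delicate point to be exactly this reverse argument: one must rule out every ``unintended'' connecting assignment — in particular argue that a clause gadget can reach the hub by no route other than a realized literal edge, and that taking precisely $\beta-1$ pendants collapses the $(\beta+1)$-way ``which channel to drop'' decision at $v_i$ into a clean binary choice uniformly over all $\beta\ge 2$. The remaining ingredients — the \P\ cases, \NP\ membership, the dummy-channel padding, and the forward direction — are routine.
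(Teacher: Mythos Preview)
Your proposal is correct and would yield a complete proof; the \P\ cases, the padding step to pass from $k=\beta+1$ to arbitrary $k>\beta$, and both directions of the reduction all go through as you describe. The route, however, differs from the paper's in one structural respect. The paper first isolates an intermediate problem, \unisat\ (CNF in which every clause is all-positive or all-negative), proves it \NP-complete by a short variable-doubling trick, and then reduces \unisat\ to \prob($\beta+1,\beta$) with a \emph{single} clause node $C_j$ adjacent directly to the variable nodes; because the clause is uniform, $C_j$ can have the one-element \specm\ $\{1\}$ or $\{0\}$ and the gadget stays minimal. You instead reduce straight from \textsc{3-Sat} and absorb the mixed-sign clauses by inserting a layer of literal-occurrence nodes $a_{j,i}$ between $c_j$ and the variable nodes, letting $c_j$ open both ``truth'' channels so that the sign information lives on the $a_{j,i}$'s. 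The trade-off is clean: the paper's detour through \unisat\ buys a sparser gadget (no literal nodes) at the cost of an extra lemma, while your construction avoids the auxiliary satisfiability variant at the cost of one more layer of vertices and a slightly more delicate ``no unintended path out of a clause gadget'' argument. The variable-side machinery --- pendants forcing $\beta-1$ specific channels and a hub tying the variable nodes together on one of those forced channels --- is essentially the same in both proofs up to relabelling.
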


The second part of the statement is easy: When $\beta=1$, each SU
can only open one channel, and thus all SUs should be connected
through the same channel. Therefore, the network is connectable if
and only if there there exists a channel that belongs to every SU's
\specm (and of course the \pgraph must be connected), which is easy
to check. When $k\leq \beta$, each SU can open
all channels in its \specm, and the problem degenerates to checking
the connectivity of the \pgraph.

In the sequel we prove the \NP-completeness of \prob($k,\beta$) when
$k>\beta\geq 2$. First consider the case $k=\beta+1$. We will reduce
a special case of the Boolean Satisfiability (SAT) problem, which will be
shown to be \NP-complete, to \prob($\beta+1,\beta$), thus showing
the \NP-completeness of the latter.

A clause is called \emph{positive} if it only contains positive
literals, and is called \emph{negative} if it only contains negative
literals. For example, $x_1 \lor x_3 \lor x_5$ is positive and
$\overline{x_2} \lor \overline{x_4}$ is negative. A clause is called
\emph{uniform} if it is positive or negative. A \emph{uniform} CNF formula is the conjunction of uniform clauses. Define \unisat as the
problem of deciding whether a given uniform CNF formula is
satisfiable.

\begin{lemma}\label{lem:unisat}
\unisat is \NP-complete.
\end{lemma}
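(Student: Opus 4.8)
The plan is to reduce from a standard NP-complete variant of SAT — I would use \textsc{Monotone 3-SAT} or, more directly, plain 3-SAT — and show how to rewrite an arbitrary clause as a conjunction of uniform clauses by introducing auxiliary variables. Membership in \NP is immediate: a satisfying assignment is a polynomial-size certificate that can be checked in polynomial time, so the real content is \NP-hardness.

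First I would recall that \textsc{3-SAT} is \NP-complete. Given a 3-CNF formula $\varphi$ over variables $x_1,\dots,x_n$, the issue is clauses that mix positive and negative literals, e.g. $(x_i \lor \overline{x_j} \lor x_\ell)$. The key idea is to decouple each variable from its negation: for every variable $x_i$ introduce a fresh variable $y_i$ intended to represent $\overline{x_i}$, and enforce $y_i \equiv \overline{x_i}$ using only uniform clauses. The constraint ``$x_i$ and $y_i$ are not both true'' is the negative clause $(\overline{x_i} \lor \overline{y_i})$, and ``$x_i$ and $y_i$ are not both false'' is the positive clause $(x_i \lor y_i)$; together these force $y_i = \overline{x_i}$. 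Then I rewrite each original clause of $\varphi$ by replacing every negative literal $\overline{x_j}$ with the positive literal $y_j$. After this substitution every clause contains only positive literals, hence is uniform (indeed positive). The resulting formula $\varphi'$ is a uniform CNF formula, has $2n$ variables and $m + 2n$ clauses where $m$ is the number of clauses of $\varphi$, and is clearly constructible in polynomial time.

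The correctness argument is a routine equivalence check: if $\varphi$ is satisfiable via an assignment $\alpha$, extend it by $y_i := \overline{\alpha(x_i)}$; the pairing clauses are satisfied by construction, and each rewritten clause is satisfied because the substitution preserves the truth value of every literal under this extension. Conversely, any satisfying assignment of $\varphi'$ must satisfy all $2n$ pairing clauses, which forces $y_i = \overline{x_i}$, so restricting to the $x_i$ variables satisfies every original clause of $\varphi$. I do not anticipate a serious obstacle here; the only point requiring a little care is making sure the rewriting of clauses is stated uniformly (replace negative literals by the new positive variables, leave positive literals untouched) so that \emph{every} output clause is genuinely positive — one could equally symmetrize and allow negative output clauses, but keeping them all positive is cleanest. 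If one prefers to start the reduction from \textsc{Monotone 3-SAT} (where each clause is already all-positive or all-negative) the work is even lighter, but since that itself is usually derived from 3-SAT I would just present the direct reduction above.
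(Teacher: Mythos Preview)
Your proposal is correct and is essentially the same reduction as the paper's: introduce a fresh variable $y_i$ to stand for $\overline{x_i}$, enforce $y_i \leftrightarrow \overline{x_i}$ with the uniform clauses $(x_i \lor y_i)$ and $(\overline{x_i}\lor\overline{y_i})$, and replace every occurrence of $\overline{x_i}$ by $y_i$. The only cosmetic differences are that the paper reduces from general \textsc{SAT} rather than 3-\textsc{SAT} and introduces $y_i$ only when $\overline{x_i}$ actually occurs.
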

\begin{proof}
Let $F$ be a CNF formula with variable set $\{x_1,x_2,\ldots,x_n\}$. For each $i$ such that $\overline{x_i}$ appears in $F$, we create a new variable $y_i$, and do the following:
\begin{itemize}
\item substitute $y_i$ for all occurrences of $\overline{x_i}$;
\item add two clauses $x_i \lor y_i$ and $\overline{x_i} \lor \overline{y_i}$ to $F$. More formally, let $F \leftarrow F \land (x_i \lor y_i) \land (\overline{x_i} \lor \overline{y_i})$. This ensures $y_i = \overline{x_i}$ in any satisfying assignment of $F$.
\end{itemize}
Call the new formula $F'$.
For example, if $F=(x_1\lor \overline{x_2}) \land (\overline{x_1} \lor x_3)$, then $F'=(x_1\lor y_2) \land (y_1 \lor x_3) \land (x_1 \lor y_1) \land (\overline{x_1} \lor \overline{y_1}) \land (x_2 \lor y_2) \land (\overline{x_2} \lor \overline{y_2})$.

It is easy to see that $F'$ is a uniform CNF formula, and that $F$ is satisfiable if and only if $F'$ is satisfiable. This constitutes a reduction from \textsc{SAT} to \unisat, which concludes the proof. \qed
\end{proof}

\begin{theorem}\label{thm:npc}
\prob($\beta+1,\beta$) is \NP-complete for any integer $\beta\geq
2$.
\end{theorem}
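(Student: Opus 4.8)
The plan is first to note that \prob($\beta+1,\beta$) lies in \NP: a \specassign\ is a polynomial-size certificate, and one checks in polynomial time that it respects every \specm\ and every \budget\ and that the induced \realgraph\ is connected. The substance is the \NP-hardness, which I would obtain by a polynomial-time reduction from \unisat\ (\NP-complete by Lemma~\ref{lem:unisat}). Fix the channel set $C=\{1,2,\dots,\beta+1\}$ and give every node the common \budget\ $\beta$. The guiding observation is that a node whose \specm\ is all of $C$ must leave exactly one channel unopened, while a node whose \specm\ is a singleton $\{c\}$ merely tests whether a neighbour opens channel $c$; channels $3,\dots,\beta+1$ will serve as a neutral ``backbone'' that every variable node is forced to open, leaving it a genuine binary choice between channels $1$ and $2$.

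Given a uniform CNF formula $F$ over $x_1,\dots,x_n$, I would build a \network\ as follows. For each variable $x_i$ create a node $v_i$ with $\sm(v_i)=C$; for every $j\in\{3,\dots,\beta+1\}$ attach to $v_i$ a pendant node $w_{i,j}$ with $\sm(w_{i,j})=\{j\}$ whose only \pedge\ is $\{v_i,w_{i,j}\}$. Add \pedges\ $\{v_i,v_{i+1}\}$ for $1\le i<n$. For each positive clause create a node with \specm\ $\{1\}$ joined by \pedges\ to the nodes $v_i$ of the variables it contains, and for each negative clause a node with \specm\ $\{2\}$ joined likewise. The \pgraph\ is connected, and since $\beta$ is a constant it has size polynomial in $|F|$. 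The intended dictionary is ``$v_i$ opens channel $1$'' $\leftrightarrow$ $x_i=\textsc{true}$ and ``$v_i$ opens channel $2$'' $\leftrightarrow$ $x_i=\textsc{false}$.

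For the forward direction, from a satisfying assignment put $\mathcal{SA}(v_i)=\{3,\dots,\beta+1\}\cup\{1\}$ or $\{3,\dots,\beta+1\}\cup\{2\}$ according to the truth value of $x_i$, set $\mathcal{SA}(w_{i,j})=\{j\}$, and set each clause node's assignment equal to its single available channel; then the variable path is realized through channel $3$, every pendant edge is realized, and every clause node has a realized edge to some node of a satisfied literal, so the \realgraph\ is connected. For the backward direction, if the \realgraph\ is connected then no node is isolated, so each pendant forces $j\in\mathcal{SA}(v_i)$ for all $j\in\{3,\dots,\beta+1\}$; since $|\mathcal{SA}(v_i)|\le\beta$ this leaves at most one of $\{1,2\}$ in $\mathcal{SA}(v_i)$, which defines a truth assignment (breaking ties arbitrarily when neither of $1,2$ is opened). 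Each non-isolated positive clause node has a realized edge to some $v_i$ with $1\in\mathcal{SA}(v_i)$, i.e.\ to a true literal, and symmetrically for negative clauses, so $F$ is satisfied.

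The only delicate point is the construction itself: one must force the per-variable choice to be genuinely binary, which is exactly what the $\beta-1$ pendants accomplish, and one must check that a variable node opening neither channel $1$ nor channel $2$ causes no harm — it simply helps no clause, which is consistent with assigning $x_i$ an arbitrary value. Everything else (polynomiality of the reduction, connectivity of the \pgraph, and the two implications) is straightforward bookkeeping. This yields \NP-completeness of \prob($\beta+1,\beta$) for every $\beta\ge 2$.
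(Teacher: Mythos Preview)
Your proposal is correct and follows essentially the same reduction from \unisat\ as the paper: variable nodes with full \specm, singleton-\specm\ pendants that force the ``backbone'' channels and leave a binary choice encoding the truth value, and singleton-\specm\ clause nodes that test for a satisfied literal. The only cosmetic difference is that you connect the variable nodes by a path realized through channel~$3$, whereas the paper connects them through a single hub node $Y_2$ with \specm\ $\{2\}$; both serve the same purpose of ensuring the variable gadgets lie in one component.
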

\begin{proof}
The membership of \prob($\beta+1,\beta$) in \NP is clear. In what
follows we reduce \unisat to \prob($\beta+1,\beta$), which by
Lemma~\ref{lem:unisat} will prove the \NP-completeness of the
latter.

Let $c_1\land c_2\land \ldots \land c_m$ be an input to \unisat
where $c_j$, $1\leq j\leq m$, is a uniform clause. Assume the variable set is $\{x_1,x_2,\ldots,x_n\}$.
We construct an instance of \prob($\beta+1,\beta$) as follows.

\begin{itemize}
\item \textbf{Channels:} There are $\beta+1$ channels $\{0,1,2,\ldots,\beta\}$.

\item \textbf{SUs:}
\begin{itemize}
\item For each variable $x_i$, there is a corresponding SU $X_i$
with spectrum map $\sm(X_i)=\{0,1,2,\ldots,\beta\}$ (which contains all
possible channels);

\item for each clause $c_j$, $1\leq j\leq m$, there is a corresponding
SU $C_j$ with $\sm(C_j)=\{p_j\}$, where $p_j=1$ if $c_j$ is positive
and $p_j=0$ if $c_j$ is negative;

\item there is an SU $Y_{2}$ with $\sm(Y_2)=\{2\}$. For every $1\leq i\leq n$ and $2\leq k\leq \beta$, there is an SU $Y_{i,k}$ with $\sm(Y_{i,k})=\{k\}$; and
\item all SUs have the same \budget $\beta$.
\end{itemize}

\item \textbf{Potential Graph:} For each clause $c_j$ and each
variable $x_i$ that appears in $c_j$ (either as $x_i$ or
$\overline{x_i}$), there is a \pedge between $X_i$ and
$C_j$. For each $1\leq i\leq n$ and $3\leq k\leq \beta$, there is a
\pedge between $X_i$ and $Y_{i,k}$. Finally, there is a
\pedge between $Y_2$ and every $X_i$, $1\leq i\leq n$.
\end{itemize}

Denote the above \network by $\mathcal{I}$, which is also an instance of \prob($\beta+1,\beta$). We now prove that
$c_1\land c_2\land \ldots \land c_m$ is satisfiable if and only if
$\mathcal{I}$ is connectable.

First consider the ``only if'' direction. Let
$A:\{x_1,\ldots,x_n\}\rightarrow\{0,1\}$ be a satisfying assignment
of $c_1\land c_2\land \ldots \land c_m$, where $0$ stands for FALSE
and $1$ for TRUE. Define a \specassign as follows. For each $1\leq
i\leq n$, let user $X_i$ open the channels $\{2,3,\ldots,\beta\}
\cup \{A(i)\}$. Every other SU opens the only channel in its \specm.

We verify that $\mathcal{I}$ is connected under the above
\specassign. For each $1\leq i\leq n$, $X_i$ is connected to $Y_2$
through channel 2. Then, for every $2\leq l\leq \beta$, $Y_{i,l}$ is
connected to $X_i$ through channel $l$. Now consider SU $C_j$ where
$1\leq j\leq m$. Since $A$ satisfies the clause $c_j$, there exists
$1\leq i\leq n$ such that:
1) $x_i$ or $\overline{x_i}$ occurs in $c_j$; and
2) $A(x_i)=1$ if $c_j$ is positive, and $A(x_i)=0$ if $c_j$ is
negative.
Thus $X_i$ and $C_j$ are connected through channel $A(x_i)$.
Therefore the \realgraph is connected, completing
the proof of the ``only if'' direction.

We next consider the ``if'' direction. Suppose there is a \specassign that makes $\mathcal{I}$ connected. For every $1\leq i\leq n$ and $2\leq l\leq \beta$, $X_i$ must open channel $l$, otherwise $Y_{i,l}$ will become an isolated vertex in the \realgraph. Since $X_i$ can open at most $\beta$ channels in total, it can open at most one of the two remaining channels $\{0,1\}$. We assume w.l.o.g. that $X_i$ opens exactly one of them, which we denote by $a_i$.

Now, for the formula $c_1\land c_2\land \ldots \land c_m$, we define
a truth assignment $A:\{x_1,\ldots,x_n\}\rightarrow\{0,1\}$ as
$A(x_i)=a_i$ for all $1\leq i\leq n$. We show that $A$ satisfies the
formula. Fix $1\leq j\leq m$ and assume that $c_j$ is negative (the
case where $c_j$ is positive is totally similar). Since the \specm
of SU $C_j$ only contains channel $0$, some of its neighbors must
open channel 0. Hence, there exists $1\leq i\leq n$ such that
$\overline{x_i}$ appears in $c_j$ and the corresponding SU $X_i$
opens channel 0. By our construction of $A$, we have $A(x_i)=0$, and
thus the clause $c_j$ is satisfied by $A$. Since $j$ is chosen
arbitrarily, the formula $c_1\land c_2\land \ldots \land c_m$ is
satisfied by $A$. This completes the reduction from \unisat to
\prob($\beta,\beta+1$), and the theorem follows. \qed
\end{proof}

\begin{corollary}\label{cor:npc}
\prob($k,\beta$) is \NP-complete for any integers $k>\beta\geq
2$.
\end{corollary}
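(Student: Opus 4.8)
The plan is to reduce \prob($\beta+1,\beta$), shown \NP-complete in Theorem~\ref{thm:npc}, to \prob($k,\beta$) for every $k > \beta+1$ by a trivial padding argument; together with Theorem~\ref{thm:npc} itself (which is exactly the case $k=\beta+1$) this covers all integers $k>\beta\geq 2$. First I would record that \prob($k,\beta$) lies in \NP for fixed $k,\beta$: a \specassign is a certificate of polynomial size, and verifying that it respects every \specm and every \budget and that the induced \realgraph is connected takes polynomial time.

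For the hardness part, fix $k \geq \beta+2$ and let $\mathcal{I}$ be an arbitrary instance of \prob($\beta+1,\beta$) with channel set $\{0,1,\ldots,\beta\}$. I would output the instance $\mathcal{I}'$ of \prob($k,\beta$) having exactly the same SUs, the same spectrum maps, the same \pgraph, and the same common \budget $\beta$, but with channel set $\{0,1,\ldots,k-1\}$. Since each \specm is a subset of $\{0,\ldots,\beta\}\subseteq\{0,\ldots,k-1\}$, the object $\mathcal{I}'$ is a well-formed instance of \prob($k,\beta$), and the construction clearly runs in polynomial (indeed constant-overhead) time.

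Correctness is immediate: a \specassign assigns to each SU a subset of its \specm, so none of the extra channels $\{\beta+1,\ldots,k-1\}$ can ever be opened by any SU; consequently the family of valid \specassign{}s is identical for $\mathcal{I}$ and for $\mathcal{I}'$, and whether a \pedge is realized depends only on the opened channels. Hence $\mathcal{I}$ is connectable if and only if $\mathcal{I}'$ is connectable, so the reduction is correct and the corollary follows from Theorem~\ref{thm:npc}. The only point needing any attention — and it is a minor one — is to confirm, directly against the definition of a \specassign, that the padded-in channels are genuinely inert; beyond that there is no real obstacle, the argument being a routine padding.
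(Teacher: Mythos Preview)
Your proof is correct and follows essentially the same padding reduction from \prob($\beta+1,\beta$) as the paper. The only cosmetic difference is that the paper places the $k-\beta-1$ extra channels into one arbitrary SU's \specm (arguing they are useless because no other SU can share them), whereas you leave them out of every \specm; both variants are immediately seen to be equivalent to the original instance.
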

\begin{proof}
By a simple reduction from \prob($\beta+1,\beta$): Given an instance
of \prob($\beta+1,\beta$), create $k-\beta-1$ new channels and add
them to the \specm of an (arbitrary) SU. This gives a instance of
\prob($k,\beta$). Since the new channels are only contained in one
SU, they should not be opened, and thus the two instances are
equivalent. Hence the theorem follows. \qed
\end{proof}

Theorem~\ref{thm:npc} indicates that the \fullprob problem is \NP-complete even if the \network only has three channels. We further strengthen this result by proving the following theorem:

\begin{theorem}\label{thm:npc_2channels}
The \fullprob problem is \NP-complete even if there are only two channels.
\end{theorem}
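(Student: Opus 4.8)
The plan is to reduce from a suitable NP-complete problem using only two channels, say $\{0,1\}$. With two channels and budget constraints, each SU either opens $\{0\}$, $\{1\}$, or $\{0,1\}$ (assuming $\beta\geq 2$ where needed, but here SUs may have different budgets, so some SUs can be forced to have $\beta=1$). The key gadget insight is that an SU with $\beta(u)=1$ acts like a Boolean variable: it must pick channel $0$ or channel $1$. An SU with $\beta(u)=2$ and full spectrum map $\{0,1\}$ is "neutral" and can connect to both kinds of neighbors. To force adjacency constraints of the form "$u$ and $v$ must agree" or "$u$ and $v$ must differ" we use connector SUs with singleton spectrum maps. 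The natural source problem is therefore something like \textsc{Not-All-Equal 3-SAT} or a graph 2-coloring-style problem; I would aim to reduce from \unisat (Lemma~\ref{lem:unisat}) again, or more likely from a monotone variant, exploiting that with only two channels the ``choose one of $\{0,1\}$'' behaviour of a $\beta=1$ SU mimics a Boolean variable directly.

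First I would set up the construction: for each variable $x_i$ create an SU $X_i$ with $\sm(X_i)=\{0,1\}$ and $\beta(X_i)=1$, so that opening channel $1$ encodes TRUE and channel $0$ encodes FALSE. For each positive clause $c_j=x_{i_1}\lor\cdots\lor x_{i_t}$ create a clause-SU $C_j$ with $\sm(C_j)=\{1\}$, $\beta(C_j)=1$, with potential edges to each $X_{i_\ell}$ appearing in $c_j$; then $C_j$ is non-isolated in the realization graph iff some $X_{i_\ell}$ opened channel $1$, i.e. iff $c_j$ is satisfied. Symmetrically, a negative clause gets a clause-SU with spectrum map $\{0\}$. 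This already handles satisfiability via non-isolation. The remaining work is global connectivity: I need the whole realization graph to be connected, not just free of isolated vertices. For this I would add a ``backbone'': a central hub SU $H$ with $\sm(H)=\{0,1\}$ and $\beta(H)=2$, adjacent to every $X_i$ — it connects to $X_i$ via whichever channel $X_i$ opened — and adjacent also to auxiliary SUs that pin down the clause-SUs' side. To glue the clause-SUs in, attach to each $C_j$ (spectrum $\{1\}$, say) a path or direct edge through an SU of spectrum $\{1\}$ that links to $H$; care is needed so these auxiliary connections don't accidentally satisfy clauses.

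The main obstacle I anticipate is exactly this tension between local (clause-satisfaction) constraints and global connectivity: I must ensure (a) in the ``only if'' direction every satisfying assignment yields a connected realization graph, which requires the backbone to attach to every component regardless of the assignment, and (b) in the ``if'' direction connectivity genuinely forces each clause to be satisfied and not merely forces local gadget structure. The subtlety is that with two channels a neutral hub opening $\{0,1\}$ connects to everyone, so I may need several hubs of fixed single channels, or force certain SUs to budget $1$, to prevent a trivial ``everyone opens both channels'' solution — but here budgets can differ per SU, so I can simply set $\beta=1$ on all the variable- and clause-SUs and allow $\beta=2$ only on a controlled set of backbone SUs. I would finish by checking both directions of the equivalence and noting NP membership is immediate (a spectrum assignment is a polynomial-size certificate checkable in polynomial time), then invoking NP-hardness of the source problem to conclude.
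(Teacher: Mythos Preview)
Your core construction is exactly the paper's: variable-SUs $X_i$ with $\sm(X_i)=\{0,1\}$ and $\beta(X_i)=1$, clause-SUs $C_j$ with singleton spectrum $\{1\}$ or $\{0\}$ and $\beta(C_j)=1$, and a single hub $Y$ with $\sm(Y)=\{0,1\}$, $\beta(Y)=2$, adjacent to every $X_i$. That is already the entire reduction; no auxiliary SUs are needed to ``glue the clause-SUs in.'' The point you are circling around but not quite landing on is that a clause-SU $C_j$ reaches the hub \emph{through} a variable-SU $X_i$ if and only if that $X_i$ opened $C_j$'s channel, i.e.\ if and only if $x_i$ satisfies $c_j$. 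So global connectivity of the realization graph is equivalent to every clause being satisfied, and the backbone $Y$--$X_1,\ldots,X_n$ is automatically connected since $Y$ opens both channels.

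The auxiliary path you contemplate (from $C_j$ through a spectrum-$\{1\}$ SU to $H$) is not just unnecessary but would break the reduction: it would let $C_j$ attach to the backbone regardless of what the variable-SUs open, so an unsatisfiable formula could still yield a connectable network. You already flag this danger (``care is needed so these auxiliary connections don't accidentally satisfy clauses''), but the resolution is simply to omit them. Once you drop that piece, both directions of the equivalence go through exactly as you outline, and you recover the paper's proof verbatim.
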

\begin{proof}
We present a reduction from \unisat similar as in the proof of
Theorem~\ref{thm:npc}. Let $c_1\land c_2\land \ldots \land c_m$ be a
uniform CNF clause with variable set $\{x_1,x_2,\ldots,x_n\}$.
Construct a \network as follows: There are two channels \{0,1\}. For
each variable $x_i$ there is a corresponding SU $X_i$ with \specm
$\sm(X_i)=\{0,1\}$ and \budget $\beta(X_i)=1$. For each clause $c_j$
there is a corresponding SU $C_j$ with $\sm(C_j)=\{p_j\}$ and
$\beta(C_j)=1$, where $p_j=1$ if $c_j$ is positive and $p_j=0$ if
$c_j$ is negative. There is an SU $Y$ with $\sm(Y)=\{0,1\}$ and
$\beta(Y)=2$. Note that, unlike in the case of \prob($k,\beta$), SUs
can have different antenna budgets. Finally, the edges of the
\pgraph include: $\{X_i,C_j\}$ for all $i,j$ such that $x_i$ or
$\overline{x_i}$ appears in $c_j$, and $\{Y,X_i\}$ for all $i$. This
completes the construction of the \network, which is denoted by
$\mathcal{I}$. By an analogous argument as in the proof of
Theorem~\ref{thm:npc}, $c_1\land c_2\land \ldots \land c_m$ is
satisfiable if and only if $\mathcal{I}$ is connectable, concluding
the proof of Theorem~\ref{thm:npc_2channels}. \qed
\end{proof}

Theorem~\ref{thm:npc_2channels} is sharp in that, as noted before, the problem is polynomial-time solvable when there is only one channel.

\subsection{Exact Algorithms}
In this subsection we design algorithms for deciding whether a given \network is connectable.
Since the problem is NP-complete, we cannot expect a polynomial time algorithm.

Let $n,k,t$ denote the number of SUs, the number of channels, and
the maximum size of any SU's \specm, respectively ($t\leq k$). The
simplest idea is to exhaustively examine all possible spectrum
assignments to see if there exists one that makes the network
connected. Since each SU can have at most $2^{t}$ possible ways of
opening channels, the number of assignments is at most $2^{tn}$.
Checking each assignment takes poly($n,k$) time. Thus the running
time of this approach is bounded by $2^{tn}(nk)^{O(1)}$, which is
reasonable when $t$ is small. However, since in general $t$ can be
as large as $k$, this only gives a $2^{O(kn)}$ bound, which is
unsatisfactory if $k$ is large. In the following we present another
algorithm for the problem that runs faster than the above approach
when $k$ is large.

\begin{theorem}\label{thm:alg_general}
There is an algorithm that decides whether a given \network is
connectable in time $2^{O(k+n\log n)}$.
\end{theorem}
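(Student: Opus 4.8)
The plan is to reduce the connectability question to a purely combinatorial one about spanning trees carrying an edge-coloring, and then solve that by brute-forcing the tree and running a small dynamic program over subsets of channels.

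First I would prove the following reformulation: a \network is connectable if and only if there exist a spanning tree $T$ of the \pgraph $\mathcal{PG}$ and a coloring $\chi:E(T)\to C$ such that (i) $\chi(\{u,v\})\in \sm(u)\cap \sm(v)$ for every tree edge $\{u,v\}$, and (ii) for every vertex $u$ the number of distinct colors on tree edges incident to $u$ is at most $\beta(u)$. For the ``if'' direction, given $(T,\chi)$ let $\mathcal{SA}(u)$ be the set of colors appearing on edges of $T$ incident to $u$; conditions (i) and (ii) make $\mathcal{SA}$ a legal \specassign, and every edge of $T$ is realized, so $\mathcal{RG}\supseteq T$ is connected. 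For the ``only if'' direction, take a \specassign whose \realgraph is connected, pick any spanning tree $T$ of $\mathcal{RG}$ (it is a subgraph of $\mathcal{PG}$), and for each edge $\{u,v\}$ choose $\chi(\{u,v\})$ in the nonempty set $\mathcal{SA}(u)\cap \mathcal{SA}(v)$; the colors incident to $u$ form a subset of $\mathcal{SA}(u)$, so (i) and (ii) hold.

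The algorithm then enumerates every spanning tree $T$ of $\mathcal{PG}$ — there are at most $n^{n-2}=2^{O(n\log n)}$ of them (e.g. iterate over all labeled trees on the vertex set and keep those whose edge set lies in $\mathcal{PG}$) — and for each $T$ decides whether a coloring satisfying (i) and (ii) exists, by a bottom-up dynamic program after rooting $T$ arbitrarily. The key point is that, for a vertex $u$ with parent $p$, the only feature of a partial coloring of $u$'s subtree that matters above $u$ is the set $Q\subseteq C$ of colors already used on the edges from $u$ to its children. So the DP computes, for each $u$, the family $\mathcal{F}(u)$ of all sets $Q$ with $Q\subseteq\sm(u)$ and $|Q|\le\beta(u)$ that are achievable by some coloring of the edges strictly below $u$ obeying (i) and (ii) there. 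The transition at $u$ with children $v_1,\dots,v_d$ processes children one at a time: from any $Q(v_i)\in\mathcal{F}(v_i)$ one may color edge $\{u,v_i\}$ with any $c\in\sm(u)\cap\sm(v_i)$ such that $c\in Q(v_i)$ or $|Q(v_i)|<\beta(v_i)$ (respecting $v_i$'s budget), accumulating $c$ into the running set for $u$ and discarding it once that set exceeds size $\beta(u)$. Each $\mathcal{F}(\cdot)$ has at most $2^k$ members and each step tries at most $k$ colors, so the DP runs in $2^{O(k)}\,\mathrm{poly}(n)$ time, and $T$ is colorable iff $\mathcal{F}(\text{root})\ne\emptyset$. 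Summing over all candidate trees gives total time $2^{O(n\log n)}\cdot 2^{O(k)}\,\mathrm{poly}(n)=2^{O(k+n\log n)}$.

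I expect the main obstacle to be verifying the dynamic program: one must argue that "the set of colors $u$ shares with its children" is a complete state (nothing else about the subtree's coloring is ever needed) and handle correctly, at each endpoint of each tree edge, the distinction between reusing an already-opened channel (no budget cost) and opening a new one (costs one unit of $\beta$). The reformulation and the $n^{n-2}$ bound on the number of spanning trees are routine; the only remaining remark is that if $\mathcal{PG}$ is disconnected the enumeration produces no tree, and the algorithm correctly concludes that the \network is not connectable.
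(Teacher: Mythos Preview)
Your proposal is correct and follows essentially the same approach as the paper: reduce to spanning trees of $\mathcal{PG}$ (at most $n^{n-2}$ of them by Cayley's formula), and for each candidate tree solve the resulting tree instance by a bottom-up dynamic program over subsets of channels in $2^{O(k)}\mathrm{poly}(n)$ time. The paper simply invokes its separate tree-case theorem as a black box, whereas you recast the tree subproblem as an edge-coloring question and spell out the DP directly; the two DPs differ only cosmetically (the paper tracks the full channel set opened at $v$, you track the colors on $v$'s child edges and defer the parent-edge color), and both are valid.
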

\begin{proof}
Let $\mathcal{I}$ be a given \network with \pgraph $\mathcal{PG}$.
Let $n$ be the number of SUs and $k$ the number of channels. Assume
that $\mathcal{I}$ is connected under some \specassign. Clearly the
realization graph contains a spanning tree of $\mathcal{PG}$, say
$T$, as a subgraph. If we change the \pgraph to $T$ while keeping
all other parameters unchanged, the resulting network will still be
connected under the same \specassign. Thus, it suffices to check
whether there exists a spanning tree $T$ of $\mathcal{G}$ such that
$\mathcal{I}$ is connectable when substituting $T$ for
$\mathcal{PG}$ as its \pgraph. Using the algorithm of
\cite{spanntree}, we can list all spanning trees of $\mathcal{PG}$
in time $O(Nn)$ where $N$ is the number of spanning trees of
$\mathcal{PG}$. By Cayley's formula \cite{cayley,cayley_arxiv} we
have $N\leq n^{n-2}$. Finally, for each spanning tree $T$, we can
use the algorithm in Theorem~\ref{thm:alg_tree} (which will appear
in Section~\ref{sec:trees}) to decide whether the network is
connectable in time $2^{O(k)}n^{O(1)}$. The total running time of
the algorithm is $O(n^{n-2})2^{O(k)}n^{O(1)}=2^{O(k+n\log n)}$.
\qed
\end{proof}

Combining Theorem~\ref{thm:alg_general} with the brute-force approach, we obtain:

\begin{corollary}\label{cor:alg_general}
The \fullprob problem is solvable can be solved in time $2^{O(\min\{kn,k+n\log
n\})}$.
\end{corollary}

\section{\fullprob with Complete Potential
Graphs}\label{sec:complete}
In this section we consider the special case of the \fullprob
problem, in which the \pgraph of the \network is complete. We first
show that this restriction does not make the problem tractable in
polynomial time.

\begin{theorem}\label{thm:npc_complete}
The \fullprob problem is NP-complete even when the \pgraph is
complete and all SUs have the same \budget $\beta=2$.
\end{theorem}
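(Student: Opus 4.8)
The plan is to give a polynomial reduction from \unisat (which is \NP-complete by Lemma~\ref{lem:unisat}) to the restricted \fullprob problem, designing the gadgets so that the completeness of the \pgraph cannot be abused to ``shortcut'' connectivity. A convenient first step is to reformulate connectivity for complete \pgraphs: under a \specassign, the \realgraph is connected iff the bipartite incidence graph $B$ --- one side the SUs, the other the channels, with $u$ adjacent to $c$ precisely when $c\in\mathcal{SA}(u)$ --- is connected (up to the trivial remark that an SU opening no channel is isolated). With $\beta=2$ each SU therefore either realizes an edge between two channels of its \specm or ``hangs'' on a single channel, and the task becomes to make all these pieces plus all opened channels form one connected object. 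Membership in \NP is clear, so the real work is the hardness reduction.

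Given a uniform formula $c_1\wedge\dots\wedge c_m$ over $x_1,\dots,x_n$, I would construct the following \network with complete \pgraph and $\beta=2$ for every SU. Add a backbone channel $\rho$ and a pendant SU $P_\rho$ with $\sm(P_\rho)=\{\rho\}$. For each variable $x_i$ create an SU $X_i$ with $\sm(X_i)=\{\rho,t_i,f_i\}$. For each clause $c_j$ introduce a private channel $d_j$, a pendant SU $P_j$ with $\sm(P_j)=\{d_j\}$, and an SU $C_j$ whose \specm is $\{d_j\}\cup\{t_i : x_i\in c_j\}$ if $c_j$ is positive and $\{d_j\}\cup\{f_i : \overline{x_i}\in c_j\}$ if $c_j$ is negative. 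The design intent is that $X_i$ must use one of its two slots to reach $\rho$, leaving room for exactly one of $t_i,f_i$ (the truth value of $x_i$), while $P_j$ forces $C_j$ to spend a slot on the private channel $d_j$, so $C_j$ can activate at most one ``truth channel''; this last point is what stops a clause SU from attaching itself to the network through several literals at once.

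For the ``if'' direction I would just read a \specassign off a satisfying assignment $A$: let $X_i$ open $\{\rho,t_i\}$ or $\{\rho,f_i\}$ according to $A(x_i)$, let each pendant open its unique channel, and let $C_j$ open $d_j$ together with one truth channel coming from a literal of $c_j$ that $A$ satisfies. Then $P_\rho$ and all $X_i$ share $\rho$, each $C_j,P_j$ attaches to the corresponding $X_i$ via the chosen truth channel, and $B$ is connected.

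The ``only if'' direction is the heart of the proof, and the main obstacle is ruling out spurious connected assignments. The argument I have in mind: $P_j$ forces $C_j$ to open $d_j$, and if $C_j$ opened no truth channel its component would be just $\{C_j,P_j\}$, so $C_j$ opens exactly one truth channel $z_j$; also $P_\rho$ forces some $X_i$ to open $\rho$, so the $\rho$-component is nontrivial. The key claim is that for every $j$, if $z_j=t_{i^*}$ (the positive case) then $X_{i^*}$ opens $t_{i^*}$: in $B$ the only vertices touching $t_{i^*}$ are $X_{i^*}$ and clause SUs $C_{j'}$ with $z_{j'}=t_{i^*}$, and each such $C_{j'}$ leads only to its dead-end pendant, so if $X_{i^*}$ did not open $t_{i^*}$ the entire component of $C_j$ would consist of such clause SUs and their pendants and would miss $\rho$ --- contradicting connectedness. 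The same style of local analysis shows no $X_i$ opens both $t_i$ and $f_i$ (it would otherwise be trapped in a clause-only island disjoint from $\rho$), so defining $A(x_i)$ to be the value of the unique truth channel $X_i$ opened (arbitrarily if none) is well defined, and the key claim then shows every clause is satisfied. Making these component-reachability arguments watertight --- essentially, verifying that every route from a clause gadget to the backbone passes through a genuinely ``committed'' variable --- is where the real care is needed.
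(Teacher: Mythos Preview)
Your reduction from \unisat is correct: the pendant SUs $P_j$ and $P_\rho$ pin down one of the two slots of each $C_j$ and anchor the backbone, the private channels $d_j$ ensure each clause gadget can only escape through a single truth channel, and your component-isolation arguments in the ``only if'' direction go through as written (one small caveat: the bipartite characterisation is only literally true if $B$ is restricted to \emph{opened} channels, but you never actually rely on the exact statement in the hard direction, so this is harmless).

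The paper, however, takes a quite different and much shorter route: it reduces from \textsc{Hamiltonian Path}. Given a graph $G=(V,E)$, take the SUs to be $V$, the channels to be $E$, set $\sm(v)$ to be the set of edges of $G$ incident to $v$, and give everyone $\beta=2$. Because each channel lies in the \specm of exactly two SUs, any realized edge $\{u,v\}$ must be witnessed by the channel $\{u,v\}\in E$; and since each SU opens at most two channels, the \realgraph has maximum degree~$2$. A connected \realgraph on all of $V$ is therefore a Hamiltonian path or cycle of $G$, and conversely a Hamiltonian path immediately gives a connecting assignment. Your approach has the merit of reusing the paper's \unisat machinery from Theorem~\ref{thm:npc} and of making the role of the budget explicit via slot-counting; the paper's approach buys a two-paragraph proof by exploiting the coincidence that ``$\beta=2$ plus channels-as-edges'' forces degree at most~$2$ in the realization graph, which is exactly the Hamiltonicity constraint.
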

\begin{proof}
The membership in \NP is trivial. The hardness proof is by a
reduction from the \textsc{Hamiltonian Path} problem, which is to
decide whether a given graph contains a Hamiltonian path, i.e., a
simple path that passes every vertex exactly once. The
\textsc{Hamiltonian Path} problem is well-known to be NP-complete
\cite{book_npc}. Let $G=(V,E)$ be an input graph of the
\textsc{Hamiltonian Path} problem. Construct an instance of the
\fullprob problem as follows: The collection of channels is $E$ and
the set of SUs is $V$; that is, we identify a vertex in $V$ as an SU
and an edge in $E$ as a channel. For every $v\in V$, the \specm of
$v$ is the set of edges incident to $v$. All SUs have \budget
$\beta=2$. Denote this \network by $\mathcal{I}$. We will prove that
$G$ contains a Hamiltonian path if and only if $\mathcal{I}$ is
connectable.

First suppose $G$ contains a Hamiltonian path $P=v_1v_2\ldots v_n$,
where $n=|V|$. Consider the following \specassign of $\mathcal{I}$:
for each $1\leq i\leq n$, let SU $v_i$ open the channels
corresponding to the edges incident to $v_i$ in the path $P$. Thus
all SUs open two channels except for $v_1$ and $v_n$ each of whom
opens only one. For every $1\leq i\leq n-1$, $v_i$ and $v_{i+1}$ are
connected through the channel (edge) $\{v_i,v_{i+1}\}$. Hence the
realization graph of $\mathcal{I}$ under this \specassign is
connected.

Now we prove the other direction. Assume that $\mathcal{I}$ is
connectable. Fix a \specassign under which the \realgraph of
$\mathcal{I}$ is connected, and consider this particular \realgraph
$\mathcal{RG}=(V,E')$. Let $\{v_i,v_j\}$ be an arbitrary edge in
$E'$. By the definition of the \realgraph, there is a channel opened
by both $v_i$ and $v_j$. Thus there is an edge in $E$ incident to
both $v_i$ and $v_j$, which can only be $\{v_i,v_j\}$. Therefore
$\{v_i,v_j\}\in E$. This indicates $E'\subseteq E$, and hence
$\mathcal{RG}$ is a connected spanning subgraph of $G$. Since each
SU can open at most two channels, the maximum degree of
$\mathcal{RG}$ is at most 2. Therefore $\mathcal{RG}$ is either a
Hamiltonian path of $G$, or a Hamiltonian cycle which contains a
Hamiltonian path of $G$. Thus, $G$ contains a Hamiltonian path.

The reduction is complete and the theorem follows. \qed
\end{proof}

Notice that the reduction used in the proof of Theorem~\ref{thm:npc_complete} creates a \network with an unbounded number of channels. Thus Theorem~\ref{thm:npc_complete} is not stronger than Theorem~\ref{thm:general} or \ref{thm:npc_2channels}.
Recall that Theorem \ref{thm:npc_2channels} says the \fullprob problem is \NP-complete even if there are only two channels.
In contrast we will show that, with complete \pgraphs, the problem is polynomial-time tractable when the number of channels is small.

\begin{theorem}\label{thm:alg_complete}
The \fullprob problem with complete \pgraphs can be solved in
$2^{2^k+O(k)}n^{O(1)}$ time.
\end{theorem}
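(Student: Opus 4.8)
The plan is to exploit the fact that a complete \pgraph turns the connectivity question into pure combinatorics of the channel-subsets that get opened, and then to brute-force over those subset-families.

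\textbf{Structural reduction.} When the \pgraph is complete, every pair of SUs forms a \pedge, so under a \specassign $\mathcal{SA}$ the \realgraph is exactly the graph on $U$ with $u\sim v$ iff $\mathcal{SA}(u)\cap \mathcal{SA}(v)\neq\emptyset$. If $n\leq 1$ the answer is trivially ``yes''; assume $n\geq 2$. Any SU that opens $\emptyset$ is isolated, so we may assume every SU opens a nonempty set. Let $\mathcal{A}$ be the family of \emph{distinct} nonempty subsets of $C$ opened by at least one SU. Collapsing all SUs that open the same subset to a single vertex turns the \realgraph into the \emph{intersection graph} $G_{\mathcal{A}}$ on vertex set $\mathcal{A}$ (with an edge $TT'$ iff $T\cap T'\neq\emptyset$), each collapsed class being an internal clique since its common subset is nonempty; hence the \realgraph is connected iff $G_{\mathcal{A}}$ is connected. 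Therefore the \network is connectable iff there exists a family $\mathcal{F}\subseteq 2^{C}\setminus\{\emptyset\}$ that is \textbf{(1)} connected, meaning $G_{\mathcal{F}}$ is connected, and \textbf{(2)} exactly realizable, meaning some valid \specassign has $\mathcal{F}$ as its set of distinct opened subsets.

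\textbf{Algorithm.} Enumerate all $\leq 2^{2^{k}}$ families $\mathcal{F}\subseteq 2^{C}\setminus\{\emptyset\}$. For each $\mathcal{F}$, test (1) by building $G_{\mathcal{F}}$ (at most $2^{k}$ vertices) and running a graph search, in $2^{O(k)}$ time. For (2), call a subset $T$ \emph{compatible} with SU $u$ if $T\subseteq\sm(u)$ and $|T|\leq\beta(u)$; then $\mathcal{F}$ is exactly realizable iff (a) every SU has at least one compatible member of $\mathcal{F}$, and (b) the bipartite compatibility graph between $U$ and $\mathcal{F}$ admits a matching saturating $\mathcal{F}$. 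Indeed, given such a matching, open its matched subset at each matched SU and any compatible subset at the rest; conversely, a valid assignment realizing $\mathcal{F}$ yields a saturating matching by picking, for each $T\in\mathcal{F}$, one SU that opened it. Conditions (a)--(b) are checked in $\mathrm{poly}(n,2^{k})=2^{O(k)}n^{O(1)}$ time by one bipartite matching computation. Output ``connectable'' iff some $\mathcal{F}$ passes both (1) and (2); the total running time is $2^{2^{k}}\cdot 2^{O(k)}n^{O(1)}=2^{2^{k}+O(k)}n^{O(1)}$, as claimed.

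\textbf{Correctness and the delicate point.} If the \network is connectable, take $\mathcal{F}$ to be the distinct opened subsets of a good \specassign: it is connected by the structural reduction and exactly realizable by that assignment. Conversely, a family passing both tests yields, via its saturating matching, a \specassign whose opened family is exactly $\mathcal{F}$, so the \realgraph equals $G_{\mathcal{F}}$ after collapsing and is connected. The subtlety I expect to be the main thing to get right is that one must insist on \emph{exact} realizability rather than the weaker ``opened family $\subseteq\mathcal{F}$'': under the weaker condition, ``$G_{\mathcal{F}}$ connected'' would not force the \realgraph connected (a strictly smaller opened family can be disconnected even when $\mathcal{F}$ is connected). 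Because the enumeration ranges over \emph{all} subfamilies, situations in which only part of $\mathcal{F}$ can actually be used — e.g. a Hall-type clash among several subsets competing for the same SUs — are caught automatically by some smaller family, so no connectable instance is missed. Everything else, namely the collapsing argument and the time accounting, is routine.
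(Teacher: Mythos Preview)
Your proposal is correct and follows essentially the same approach as the paper: enumerate all families of channel-subsets, test connectivity of the intersection graph on each family, and verify exact realizability via a bipartite matching saturating the family side together with the condition that every SU is compatible with some member. The paper calls the intersection graph the ``spectrum graph'' and phrases the collapsing step as a graph-minor argument, but the algorithm, the realizability criterion, and the time analysis are the same as yours; your explicit exclusion of the empty set is a small nicety the paper leaves implicit.
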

\begin{proof}
Consider a \network $\mathcal{I}$ with SU set $U$, channel set $C$
and a complete \pgraph, i.e., there is a potential edge between
every pair of distinct SUs. Recall that $n=|U|$ and $k=|C|$. For each
\specassign $\mathcal{SA}$, we construct a corresponding
\emph{spectrum graph} $\mathcal{G}_{\mathcal{SA}}=(V, E)$ where
$V=\{C'\subseteq C~|~\exists u\in U \textrm{~s.t.~}
\mathcal{SA}(u)=C'\}$ and $E=\{\{C_1,C_2\}~|~C_1,C_2\in V; C_1 \cap
C_2 \neq \emptyset\}$. Thus, $V$ is the collection of subsets of $C$
that is opened by some SU, and $E$ reflexes the connectivity between
pairs of SUs that open the corresponding channels. Since each vertex
in $V$ is a subset of $C$, we have $|V|\leq 2^{k}$, and the number
of different spectrum graphs is at most $2^{2^{k}}$.

We now present a relation between $\mathcal{G}_{\mathcal{SA}}$ and the realization graph of $\mathcal{I}$ under $\mathcal{SA}$. If we label each vertex $u$ in the realization graph with $\mathcal{SA}(u)$, and contract all edges between vertices with the same label, then we obtain precisely the spectrum graph $\mathcal{G}_{\mathcal{SA}}=(V, E)$. Therefore, in the language of graph theory, $\mathcal{G}_{\mathcal{SA}}=(V, E)$ is a minor of the realization graph under $\mathcal{SA}$. Since graph minor preserves connectivity, $\mathcal{I}$ is connectable if and only if there exists a connected spectrum graph. Hence we can focus on the problem of deciding whether a connected spectrum graph exists.

Consider all possible graphs $G=(V,E)$ such that $V\subseteq 2^{C}$,
and $E=\{\{C_1,C_2\}~|~C_1,C_2\in V; C_1 \cap C_2 \neq \emptyset\}$.
There are $2^{2^k}$ such graphs each of which has size $2^{O(k)}$.
Thus we can list all such graphs in $2^{2^k+O(k)}$ time. For each
graph $G$, we need to check whether it is the spectrum graph of some
\specassign of $\mathcal{I}$. We create a bipartite graph in which
nodes on the left side are the SUs in $\mathcal{I}$, and nodes on
the right side all the vertices of $G$. We add an edge between an SU
$u$ and a vertex $C'$ of $G$ if and only if $C' \subseteq \sm(u)$
and $|C'| \leq \beta(u)$, that is, $u$ can open $C'$ in a
\specassign. The size of $H$ is poly($n,2^k$) and its construction
can be finished in poly($n,2^k$) time. Now, if $G$ is the spectrum
graph of some \specassign $\mathcal{SA}$, then we can identify
$\mathcal{SA}$ with a subgraph of $H$ consisting of all edges $(u,
\mathcal{SA}(u))$ where $u$ is an SU. In addition, in this subgraph
we have
\begin{itemize}
\item every SU $u$ has degree exactly one; and
\item every node $C'$ on the right side of $H$ has degree at least one.
\end{itemize}

Conversely, a subgraph of $H$ satisfying the above two conditions clearly induces a \specassign whose spectrum graph is exactly $G$. Therefore it suffices to examine whether $H$ contains such a subgraph. Furthermore, the above conditions are easily seen to be equivalent to:

\begin{itemize}
\item every SU $u$ has degree at least one in $G$; and
\item $G$ contains a \emph{matching} that includes all nodes on the right side.
\end{itemize}

The first condition can be checked in time linear in the size of $H$, and the second one can be examined by any polynomial time algorithm for bipartite matching (e.g., \cite{bmatching}). Therefore, we can decide whether such subgraph exists (and find one if so) in time poly($n,2^k$). By our previous analyses, this solves the \fullprob problem with complete potential graphs. The total running time of our algorithm is $2^{2^k+O(k)}{\rm poly}(n,2^k)=2^{2^k+O(k)}n^{O(1)}$. \qed
\end{proof}


\begin{theorem}\label{cor:fpt_complete}
The \fullprob problem with complete \pgraphs is fixed parameter tractable (FPT) when parameterized by the number of channels.
\end{theorem}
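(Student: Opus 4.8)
The plan is to observe that this statement is an immediate consequence of Theorem~\ref{thm:alg_complete}, so the work amounts to matching the running-time bound proved there against the definition of fixed parameter tractability. Recall that a parameterized decision problem is FPT with respect to a parameter $k$ precisely when it admits an algorithm running in time $f(k)\cdot s^{O(1)}$, where $s$ is the size of the input instance and $f$ is an arbitrary computable function of $k$ alone (see \cite{book_para}). Thus the only thing to verify is that the bound supplied by Theorem~\ref{thm:alg_complete} has exactly this shape.

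Concretely, I would first recall from Theorem~\ref{thm:alg_complete} that the \fullprob problem restricted to complete \pgraphs can be solved in time $2^{2^k+O(k)}n^{O(1)}$, where $n$ is the number of SUs and $k$ the number of channels. I would then set $f(k):=2^{2^k+O(k)}$, noting that $f$ depends on $k$ only and is trivially computable, while the factor $n^{O(1)}$ is polynomial in the size of the input. Hence the running time is of the form $f(k)\cdot n^{O(1)}$, which by definition places the problem in FPT when parameterized by the number of channels, and the theorem follows. There is essentially no obstacle to overcome here beyond invoking the definition correctly; all the genuine content — enumerating the at most $2^{2^k}$ candidate spectrum graphs on subsets of $C$, and testing realizability of each via a polynomial-time bipartite matching computation — has already been carried out in the proof of Theorem~\ref{thm:alg_complete}.
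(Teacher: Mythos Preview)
Your proposal is correct and matches the paper's approach: the paper states this theorem immediately after Theorem~\ref{thm:alg_complete} without a separate proof, treating it as a direct corollary of the $2^{2^k+O(k)}n^{O(1)}$ running-time bound. Your write-up simply makes explicit the one-line observation that this bound has the FPT form $f(k)\cdot n^{O(1)}$.
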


\section{Spectrum Connectivity on Trees and Bounded Treewidth Graphs}\label{sec:trees}
In this section, we study another special case of the \fullprob problem where the \pgraph of the \network is a tree. We will also investigate the problem on the class of bounded-treewidth graphs.
Many \NP-hard combinatorial problems become easy on trees, e.g., the dominating set problem and the vertex cover problem. Nonetheless, as indicated by the following theorem, the \fullprob problem remains hard on trees.

\subsection{Trees}

We state the complexity of the spectrum connectivity problem with trees as the \pgraph in the following theorem.

\begin{theorem}\label{thm:npc_tree}
The \fullprob problem is \NP-complete even if the \pgraph is a tree of depth one.
\end{theorem}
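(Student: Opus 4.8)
My proof strategy is to reduce from \unisat, mirroring the structure of the reduction in Theorem~\ref{thm:npc}, but redesigning the gadgets so that the \pgraph is a depth-one tree, i.e., a star. The natural idea is to make one central SU $R$ the root, and let all the clause-SUs and variable-SUs hang off it as leaves. But this creates an immediate difficulty: in the earlier reductions, variable-SU $X_i$ is connected to clause-SU $C_j$ directly by a \pedge, which cannot happen in a star centered at $R$. So the clause-satisfaction signal has to be routed through the root.

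The key trick I would use is to let the root $R$ be a high-budget SU whose job is to ``collect'' the channel choices of the variable gadgets and simultaneously ``serve'' the clause gadgets, forcing a consistency constraint between the two. Concretely: allocate two channels $t_i, f_i$ per variable $x_i$ (the true/false channels), plus perhaps one shared ``spine'' channel. Make $X_i$ a leaf adjacent only to $R$, with $\sm(X_i) = \{t_i, f_i\}$ and $\beta(X_i)=1$, so $X_i$ opens exactly one of $t_i,f_i$ — this is the truth assignment. To connect $X_i$ into the network, $R$ must open whichever of $t_i,f_i$ that $X_i$ chose; since $R$ must do this for every $i$, $R$ opens exactly one of $\{t_i,f_i\}$ per variable, and $R$'s budget should be set to exactly $n$ (the number of variables) to prevent it from opening both. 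Then each clause-SU $C_j$ is a leaf adjacent to $R$ with $\sm(C_j)$ equal to the set of literal-channels making $c_j$ true (e.g., if $x_i$ appears positively in $c_j$, include $t_i$; if negatively, include $f_i$) and $\beta(C_j)=1$; then $C_j$ can connect to $R$ iff $R$ has opened some literal-channel satisfying $c_j$, i.e., iff the assignment encoded by $R$ satisfies $c_j$. One checks that $R$'s opened set equals the set of channels opened by the $X_i$'s (forced by connectivity of the $X_i$ leaves together with the tight budget), so the whole network is connectable iff the formula is satisfiable. I may need a small auxiliary gadget (an extra leaf or a carefully chosen extra channel) to handle the edge case where $R$ would otherwise have ``slack'' in its budget, or where a clause is, say, all-negative — but since \unisat already separates positive and negative clauses, the literal-channel encoding handles uniformity cleanly.

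The two directions of the equivalence are then routine: given a satisfying assignment $A$, let $X_i$ open $t_i$ if $A(x_i)=$ TRUE and $f_i$ otherwise, let $R$ open the matching channel for each $i$, and let each $C_j$ open a literal-channel that witnesses $c_j$'s satisfaction — every leaf is then adjacent to $R$ in the \realgraph, so it is connected (a star is connected). Conversely, from any \specassign making the star connected, the leaf-connectivity of each $X_i$ forces $R$ to open $X_i$'s channel, the budget bound $\beta(R)=n$ forces $R$ to open \emph{exactly} one channel per variable and nothing else, this defines a consistent truth assignment, and the leaf-connectivity of each $C_j$ forces that assignment to satisfy $c_j$. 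Membership in \NP is immediate (guess the \specassign, check connectivity in polynomial time). The main obstacle I anticipate is getting the budget bookkeeping on $R$ exactly right so that ``$R$ opens one channel per variable'' is genuinely forced and $R$ has no room for extra channels that could spuriously satisfy clauses or short-circuit the consistency constraint; this is the delicate part of the gadget design, and it is where I would spend the most care.
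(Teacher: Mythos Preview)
Your reduction is correct: the variable-leaves $X_i$ (with budget $1$ and $\sm(X_i)=\{t_i,f_i\}$) force the root $R$ to open exactly one channel per variable, the tight budget $\beta(R)=n$ leaves $R$ no slack, and the clause-leaves $C_j$ then connect to $R$ iff the assignment encoded by $R$ satisfies $c_j$. Both directions go through as you describe, and no auxiliary gadget is needed---the budget bookkeeping already works out exactly. (Incidentally, your construction does not actually use the uniformity of the clauses; it would reduce ordinary \textsc{SAT} just as well.)

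The paper takes a different and somewhat simpler route: it reduces from \textsc{Vertex Cover}. Given a graph $G=(V,E)$ and integer $r$, the channel set is $\{c_v : v\in V\}$, the center $M$ has full \specm and budget $r$, and each edge $e=\{u,v\}$ yields a single leaf $U_e$ with $\sm(U_e)=\{c_u,c_v\}$. Then $M$'s set of opened channels is exactly a vertex cover of size at most $r$, and the star is connectable iff such a cover exists. Compared to yours, this avoids the variable-leaf gadgets entirely: the center's channel choices are constrained only by the budget $r$, not by forced per-variable selections. Your approach has the virtue of reusing the \unisat machinery already developed for Theorem~\ref{thm:npc}, and makes the ``truth assignment encoded at the center'' idea very explicit; the paper's approach is shorter and relies on a more classical \NP-hard source.
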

\begin{proof}
We give a reduction from the \textsc{Vertex Cover} problem which is
well known to be \NP-complete \cite{book_npc}. Given a graph
$G=(V,E)$ and an integer $r$, the \textsc{Vertex Cover} problem is
to decide whether there exists $r$ vertices in $V$ that cover all
the edges in $E$. Construct a \network $\mathcal{I}$ as follows. The
set of channels is $C=\{c_v~|~v\in V\}$. For each edge $e=\{u,v\}\in
E$ there is an SU $U_{e}$ with $\sm(U_{e})=\{c_u,c_v\}$ and \budget
2. There is another SU $M$ with $\sm(M)=C$ and \budget $r$. The
\pgraph is a star centered at $M$, that is, there is a \pedge
between $M$ and $U_{e}$ for every $e\in E$. This finishes the
construction of $\mathcal{I}$.

We prove that $G$ has a vertex cover of size $r$ if and only if $\mathcal{I}$ is connectable.
First assume $G$ has a vertex cover $S \subseteq V$ with $|S|\leq r$. Define a \specassign $A(S)$ as follows: $M$ opens the channels $\{c_v~|~v\in S\}$, and $U_e$ opens both channels in its \specm for all $e\in E$.
Since $S$ is a vertex cover, we have $u\in S$ or $v\in S$ for each $e=\{u,v\}\in E$. Thus at least one of $c_u$ and $c_v$ is opened by $M$, which makes it connected to $U_e$. Hence the \realgraph is connected. On the other hand, assume that the \realgraph is connected under some \specassign. For each $e=\{u,v\}\in E$, since the \pedge $\{M,U_{e}\}$ is realized, $M$ opens at least one of $c_u$ and $c_v$. Now define $S=\{v\in V~|~c_v \textrm{~is opened by~}M\}$. It is clear that $S$ is a vertex cover of $G$ of size at most $\beta(M)=r$. This completes the reduction, and the theorem follows. \qed
\end{proof}

We next show that, in contrast to Theorems~\ref{thm:npc} and \ref{thm:npc_2channels}, this special case of the problem is polynomial-time solvable when the number of channels is small.

\begin{theorem}\label{thm:alg_tree}
Given a \network whose \pgraph is a tree, we can check whether it is
connectable in $2^{O(t)}(kn)^{O(1)}$ time, where $t$ is the maximum size
of any SU's \specm. In particular, this running time is at most
$2^{O(k)}n^{O(1)}$.
\end{theorem}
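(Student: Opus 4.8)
The plan is to first make a structural simplification, then run a bottom-up dynamic program over the tree. Let $T$ be the \pgraph; it has $n$ vertices and hence exactly $n-1$ edges. Since the \realgraph under any \specassign is a spanning subgraph of $T$, and a spanning subgraph of an $n$-vertex tree with fewer than $n-1$ edges is disconnected, the \network is connectable if and only if there is a \specassign under which \emph{every} edge of $T$ is realized, i.e.\ $\mathcal{SA}(u)\cap\mathcal{SA}(v)\neq\emptyset$ for every edge $\{u,v\}$ of $T$ (subject, as always, to $\mathcal{SA}(u)\subseteq\sm(u)$ and $|\mathcal{SA}(u)|\le\beta(u)$). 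So it suffices to decide this purely local constraint-satisfaction problem on $T$.

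Root $T$ at an arbitrary SU $\rho$. Call a set $S\subseteq\sm(u)$ with $|S|\le\beta(u)$ a \emph{feasible opening} of $u$; there are at most $2^{t}$ of them and they can be enumerated in $2^{O(t)}k^{O(1)}$ time. For an SU $u$ and a feasible opening $S$ of $u$, let $f(u,S)$ be true iff there is a \specassign to the SUs in the subtree rooted at $u$ that opens $S$ at $u$ and realizes all edges of that subtree. Processing $T$ from the leaves upward, $f(u,S)$ is true exactly when for every child $v$ of $u$ there is a feasible opening $S'$ of $v$ with $S\cap S'\neq\emptyset$ and $f(v,S')$ true; for a leaf $u$, $f(u,S)$ is simply true. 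The \network is connectable iff $f(\rho,S)$ is true for some feasible opening $S$ of $\rho$.

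To make the work per tree edge $2^{O(t)}$ rather than $2^{O(t)}\cdot 2^{O(t)}$, after computing $f(v,\cdot)$ at a node $v$ I would condense it into a channel-indexed vector: $g(v,c):=\bigvee_{c\in S',\, S'\text{ feasible}} f(v,S')$ for each $c\in C$, computable in $2^{O(t)}k^{O(1)}$ time. Then the condition on a child $v$ in the recurrence for $f(u,S)$ becomes ``$S$ contains some channel $c$ with $g(v,c)$ true'', checkable in $O(|S|)\le O(t)$ time. Summing over all SUs, their at most $2^{t}$ feasible openings, and the $n-1$ tree edges, the whole computation runs in $2^{O(t)}(kn)^{O(1)}$ time; since $t\le k$ this is in particular $2^{O(k)}n^{O(1)}$.

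I do not anticipate a real obstacle: the single genuine idea is the reduction to ``realize every edge'', which collapses the global connectivity requirement into independent per-edge constraints and thus makes a state space of size $2^{O(t)}$ per node sufficient; everything after that is a standard tree DP. The points needing a little care are enumerating feasible openings subject to both the \specm and the \budget, getting the leaf base case right, and the $g(v,\cdot)$ condensation that keeps the per-edge running time at $2^{O(t)}$. This is exactly the routine invoked inside the proof of Theorem~\ref{thm:alg_general}, and it serves as the base case for the bounded-treewidth generalization treated in the remainder of this section.
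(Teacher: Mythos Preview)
Your proposal is correct and follows essentially the same approach as the paper: a bottom-up dynamic program on the rooted tree with state $f(v,S)$ indexed by the set of channels opened at $v$, together with the per-channel condensation $g(v,c)$ to keep the work along each tree edge at $2^{O(t)}$ rather than $2^{O(t)}\cdot 2^{O(t)}$. The one thing you make explicit that the paper leaves implicit is the structural observation that, because any proper spanning subgraph of a tree is disconnected, connectability is equivalent to realizing every edge; the paper's recurrence relies on this fact without stating it.
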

\begin{proof}
Let $\mathcal{I}$ be a given \network whose \pgraph $\mathcal{PG}=(V,E)$ is a tree. Root $\mathcal{PG}$ at an arbitrary node, say $r$. For each $v\in V$ let $\mathcal{PG}_v$ denote the subtree rooted at $v$, and let $\mathcal{I}_v$ denote the \network obtained by restricting $\mathcal{I}$ on $\mathcal{PG}_v$.  For every subset $S\subseteq \sm(v)$, define $f(v,S)$ to be 1 if there exists a \specassign that makes $\mathcal{I}_v$ connected in which the set of channels opened by $v$ is exactly $S$; let $f(v,S)=0$ otherwise. For each channel $c\in C$, define $g(v,c)$ to be 1 if there exists $S$, $\{c\}\subseteq S \subseteq \sm(v)$, for which $f(v,S)=1$; define $g(v,c)=0$ otherwise. Clearly $\mathcal{I}$ is connectable if and only if there exists $S\subseteq \sm(r)$ such that $f(r,S)=1$.

We compute all $f(v,S)$ and $g(v,c)$ by dynamic programming in a bottom-up manner. Initially all values to set to 0. The values for leaf nodes are easy to obtain. Assume we want to compute $f(v,S)$, given that the values of $f(v',S')$ and $g(v',c)$ are all known if $v'$ is a child of $v$. Then $f(v,S)=1$ if and only if for every child $v'$ of $v$, there exists $c\in S$ such that $g(v',c)=1$ (in which case $v$ and $v'$ are connected through channel $c$). If $f(v,S)$ turns out to be 1, we set $g(v,c)$ to 1 for all $c\in S$. It is easy to see that $g(v,c)$ will be correctly computed after the values of $f(v,S)$ are obtained for all possible $S$. After all values have been computed, we check whether $f(r,S)=1$ for some $S\subseteq \sm(r)$.

Recall that $n=|V|$, $k=|C|$, and denote $t=\max_{v\in V}|\sm(v)|$. There are at most $n(2^{t}+k)$ terms to be computed, each of which takes time ${\rm poly}(n,k)$ by our previous analysis. The final checking step takes $2^{t}{\rm poly}(n,k)$ time. Hence the total running time is $2^{t}{\rm poly}(n,k)=2^{t}(kn)^{O(1)}$, which is at most $2^{O(k)}n^{O(1)}$ since $t\leq k$.
Finally note that it is easy to modify the algorithm so that, given a connectable network it will return a \specassign that makes it connected. \qed
\end{proof}


\begin{corollary}\label{cor:fpt_tree}
The \fullprob problem with trees as \pgraphs is fixed parameter tractable when parameterized by the number of
channels.
\end{corollary}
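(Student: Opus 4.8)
The plan is to read this off directly from Theorem~\ref{thm:alg_tree}, with no further combinatorial work required. Recall that a parameterized problem with parameter $k$ is fixed parameter tractable exactly when it admits a decision algorithm running in time $f(k)\cdot n^{O(1)}$ for some computable function $f$ depending only on $k$, where the polynomial factor and its exponent are independent of $k$. Theorem~\ref{thm:alg_tree} supplies precisely such an algorithm for the restricted problem: given a \network whose \pgraph is a tree, it decides connectability in time $2^{O(t)}(kn)^{O(1)}$, where $t=\max_{v}|\sm(v)|$ is the maximum size of any \specm; and since $t\le |C|=k$ always holds, this is bounded by $2^{O(k)}n^{O(1)}$.

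Concretely, I would take $f(k):=2^{O(k)}$, reading the explicit constant hidden in the exponent off the dynamic-programming analysis in the proof of Theorem~\ref{thm:alg_tree} (the table has at most $n(2^{t}+k)$ entries, each computed in ${\rm poly}(n,k)$ time, with $t\le k$), so that the total running time has the form $f(k)\cdot n^{O(1)}$ with $f$ computable and the polynomial factor in $n$ carrying a constant exponent not depending on $k$. This matches the definition of FPT, so the corollary follows. There is essentially no obstacle: the only point to verify is that the bound of Theorem~\ref{thm:alg_tree} genuinely separates the dependence on $k$ from that on $n$ — which it does, as just noted — so no reduction, kernelization, or auxiliary argument is needed.
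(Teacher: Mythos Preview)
Your proposal is correct and matches the paper's approach exactly: the corollary is stated immediately after Theorem~\ref{thm:alg_tree} with no separate proof, as it follows directly from the $2^{O(k)}n^{O(1)}$ running-time bound established there. The only substantive point to check, which you handle correctly, is that the $(kn)^{O(1)}$ factor does not spoil the FPT form, since $k^{O(1)}$ is absorbed into $2^{O(k)}$.
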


\subsection{Bounded Treewidth Graphs}
In this part we deal with another class of potential graphs, namely the class of graphs with bounded treewidth.
Our main result is the following theorem, which generalizes Theorem~\ref{thm:alg_tree} as a tree has treewidth one.

\begin{theorem}\label{thm:alg_treewidth}
There is an algorithm that, given a \network whose \pgraph has bounded treewidth, checks whether it is
connectable in $2^{O(k)}n^{O(1)}$ time.
\end{theorem}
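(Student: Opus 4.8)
The plan is to run a dynamic program over a nice tree decomposition of the potential graph $\mathcal{PG}$, in the same spirit as the tree algorithm of Theorem~\ref{thm:alg_tree} but carrying extra bookkeeping to handle connectivity across bags. Fix a constant bound $d$ on the treewidth. Using the algorithm of \cite{bod_sjc96} followed by the transformation to a nice tree decomposition (see \cite{kloks94}), we obtain in polynomial time a rooted binary tree decomposition $(T,\{X_i\})$ of width at most $d$ with $O(n)$ nodes, where every bag has size at most $d+1$. For each tree node $i$, let $G_i$ denote the subgraph of $\mathcal{PG}$ induced by the vertices appearing in the bags of the subtree rooted at $i$, and let $\mathcal{I}_i$ be the restricted \network on $G_i$.

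The state of the DP at a node $i$ records: (a) for each SU $u\in X_i$, the set $\mathcal{SA}(u)\subseteq\sm(u)$ of channels it opens, subject to $|\mathcal{SA}(u)|\le\beta(u)$; and (b) a partition of $X_i$ into blocks indicating which vertices of the bag are currently in the same connected component of the realization graph restricted to $\mathcal{I}_i$ (using only edges whose both endpoints have been ``processed''). A state is \emph{feasible} if there is a \specassign on $\mathcal{I}_i$ extending the recorded assignment on $X_i$ such that every connected component of the resulting realization graph either lies entirely inside $X_i$ (then it must be one of the recorded blocks) or contains at least one vertex of $X_i$; in other words, no component has been ``completed'' in the wrong way. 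The number of channel-assignment choices per vertex is at most $2^{k}$, so per bag at most $2^{(d+1)k}=2^{O(k)}$ choices; the number of partitions of a $(d{+}1)$-set is a constant (a Bell number $B_{d+1}$); hence the table at each node has size $2^{O(k)}$, and there are $O(n)$ nodes.

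The transitions follow the standard four cases of a nice tree decomposition. At a \emph{leaf} bag, enumerate all valid channel assignments for the one vertex and record the trivial singleton partition. At an \emph{introduce} node adding vertex $v$ to $X_i$ from child $j$, enumerate $\mathcal{SA}(v)$, then for each already-present $u\in X_j$ with $\{u,v\}\in E$ and $\mathcal{SA}(u)\cap\mathcal{SA}(v)\ne\emptyset$, merge the blocks of $u$ and $v$; $v$ starts in its own block otherwise. At a \emph{forget} node removing $v$, we must check that $v$'s block does not become a stranded component: $v$ may be dropped only if its block still contains another vertex of the new bag, \emph{unless} the block is the unique one and represents the whole of $\mathcal{I}_i$ being already connected — but since further vertices may still be attached higher up, the safe rule is to forbid forgetting $v$ if it is the sole remaining representative of its block, and to propagate a ``global connectivity achieved'' flag separately; after forgetting, project the partition onto the remaining vertices. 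At a \emph{join} node with children $j_1,j_2$ sharing bag $X_i$ and identical channel assignments on $X_i$, take the transitive closure of the union of the two partitions on $X_i$. Finally, at the root, accept iff some feasible state has all of $X_{\text{root}}$ in a single block (equivalently the global flag is set), which by the properties of tree decompositions certifies that the whole realization graph on $U$ is connected.

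The main obstacle is tracking connectivity correctly through forget and join nodes: a block of the partition that contains no bag vertex must have been genuinely completed as a full connected component, and we must never allow two such completed components to coexist, nor allow a component to be completed prematurely while it could still have received an edge from an un-introduced vertex. The partition-refinement DP over a nice tree decomposition is a well-established technique for exactly this kind of global-connectivity constraint (as in the classical treewidth algorithms for \textsc{Steiner Tree} and \textsc{Hamiltonian Path}), and the only new ingredient here is that the ``edge exists'' predicate is not given but chosen, namely $\{u,v\}$ is realized iff $\mathcal{SA}(u)\cap\mathcal{SA}(v)\ne\emptyset$; this is handled transparently by enumerating the $\mathcal{SA}(u)$ values inside each bag, which is exactly what inflates the table by the $2^{O(k)}$ factor rather than anything worse. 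Correctness follows by a straightforward induction on $T$ showing that the feasible states at node $i$ are precisely those realizable by some partial \specassign on $\mathcal{I}_i$; the running time is $O(n)\cdot 2^{O(k)}\cdot n^{O(1)}=2^{O(k)}n^{O(1)}$, where the constant in the exponent depends on the treewidth bound $d$. This proves Theorem~\ref{thm:alg_treewidth}, and since $2^{O(k)}n^{O(1)}$ is FPT in $k$, it also yields, as a corollary, that \fullprob restricted to bounded-treewidth potential graphs is fixed parameter tractable when parameterized by the number of channels.
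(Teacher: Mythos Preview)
Your proposal is correct and follows the standard connectivity dynamic program on a nice tree decomposition: for every bag you carry both the channel sets opened by the bag vertices and a partition of the bag recording which of them currently lie in the same component of the partial realization graph. This is precisely the bookkeeping needed so that forget and join nodes can be handled soundly, and it yields the claimed $2^{O(k)}n^{O(1)}$ bound since the number of partitions of a $(d{+}1)$-element bag is the Bell number $B_{d+1}=O(1)$.

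The paper's own argument takes a genuinely different and lighter route: its table $\mathcal{B}_i(S_1,\ldots,S_{|X_i|})$ records only the channel sets on the bag and asks whether the whole sub-instance $\mathcal{I}_i$ is already connected; no partition information is maintained. The stated recurrence then requires, among other things, that the realization graph restricted to the current bag $G[X_i]$ be connected under $\mathcal{SA}$. What your approach buys is robustness: the paper's ``$G[X_i]$ connected'' condition is in general neither necessary nor sufficient for $\mathcal{I}_i$ to be connected, because two bag vertices that are non-adjacent in $G$ may well be linked only through already-processed vertices in $Z_L$ or $Z_R$ (take, e.g., a $4$-cycle $a\!-\!b\!-\!c\!-\!d\!-\!a$ with root bag $\{a,c\}$ and child bags $\{a,b,c\}$, $\{a,c,d\}$: the paper's recurrence rejects at the root since $a,c$ are non-adjacent, yet the instance is connectable). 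Your partition state encodes exactly how bag vertices are connected through the processed subgraph rather than through $G[X_i]$ alone, so this pitfall does not arise. The only place your write-up is slightly loose is the forget rule: the clean formulation is simply to declare a state infeasible whenever the forgotten vertex is alone in its block, and to accept at the root iff some feasible state has a single block; the separate ``global connectivity'' flag is then unnecessary.
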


\begin{proof}
Suppose we are given a \network $\mathcal{I}$ with \pgraph $G=(V,E)$, which has treewidth $\tw=O(1)$.
Let $(T=(I,F),\{X_i~|~i\in I\})$ be a nice tree decomposition of $G$ of width $\tw$ (see Section~\ref{subsec:treedecomp} for the related notions). Recall that $T$ is a rooted binary tree with $O(|V|)$ nodes and can be found in polynomial time. Let $r$ be the root of $T$. For every non-leaf node $i$ of $T$, let $i_L$ and $i_R$ be the two children of $i$. (We can always add dummy leaf-nodes to make every non-leaf node have exactly two children, which at most doubles the size of $T$.)

For each $i\in I$, define
$$Y_i:=\{v\in X_j~|~j=i \textrm{~or~} j \textrm{~is a descendent of~}i\},$$
and let $\mathcal{I}_i$ be a new instance of the problem that is almost identical to $\mathcal{I}$ except that we replace the \pgraph with $G[Y_i]$, i.e., the subgraph of $G$ induced on the vertex set $Y_i \subseteq V$.

For each $i\in I$, suppose $X_i=\{v_1,v_2,\ldots,v_{t}\}$ where $t=|X_i|$ and $v_j\in V$ for all $1\leq j\leq t$.
For each tuple $(S_1,S_2,\ldots,S_t)$ such that $S_j \subseteq \sm(v_j)$ for all $1\leq j\leq t$, we use a Boolean variable
$\mathcal{B}_i(S_1, S_2,\ldots,S_t)$ to indicate whether there exists a spectrum assignment $\mathcal{SA}_i$ that makes $\mathcal{I}_i$ connected such that $\mathcal{SA}_i(v_j)=S_j$ for all $1\leq j\leq t$.
Notice that for each $i$, the number of such variables is at most $(2^k)^{|X_i|}\leq 2^{k\cdot \tw}$, and we can list  them in $2^{O(k\cdot \tw)}$ time.
Initially all variables are set to FALSE.
Assume $X_r=\{w_1,w_2,\ldots,w_{|X_r|}\}$ (recall that $r$ is the root of $T$). Then, clearly, deciding whether $\mathcal{I}$ is connectable is equivalent to checking whether there exists $(S_1,S_2,\ldots,S_{|X_r|})$, where $S_j\subseteq \sm(w_j)$ for all $1\leq j\leq |X_r|$, such that $\mathcal{B}_r(S_1,S_2,\ldots,S_{|X_r|})$ is TRUE.

We will compute the values of all possible $\mathcal{B}_i(S_1,S_2,\ldots,S_t)$ by dynamic programming.
For each leaf node $l$, we can compute the values of all the variables related to $\mathcal{I}_l$ in time $2^{O(k\cdot \tw)}n^{O(1)}$ by the brute-force approach.

Now suppose we want to decide the value of $\mathcal{B}_i(S_1,S_2,\ldots,S_{|X_i|})$ for some non-leaf node $i$, provided that the variables related to any children of $i$ have all been correctly computed. Recall that $i_L$ and $i_R$ are the two children of $i$. We define:
 \begin{itemize}
\item $NEW=X_i \setminus (Y_{i_L}\cup Y_{i_R})$;
\item $OLD=X_i \setminus NEW = X_i \cap (Y_{i_L}\cup Y_{i_R})$;
\item $Z_L=Y_{i_L} \setminus X_i$, and $Z_R = Y_{i_R} \setminus X_i$.
\end{itemize}
It is clear that $Y_i = NEW \cup OLD \cup Z_L \cup Z_R$.
By using the properties of a tree decomposition, we have the following fact:

\begin{lemma}\label{lem:disjoint}
$NEW$, $Z_L$, and $Z_R$ are three pairwise disjoint subsets of $V$, and there is no edge of $G$ whose endpoints lie in different subsets.
\end{lemma}
\begin{proof}
 Since $NEW \subseteq X_i$ and $Z_L=Y_{i_L}\setminus X_i$, we have $NEW \cap Z_L=\emptyset$, and similarly $NEW \cap Z_R = \emptyset$. Assume that $Z_L \cap Z_R \neq \emptyset$, and let $v\in Z_L \cap Z_R$. Since $Z_L \subseteq Y_{i_L}$ and $Z_R \subseteq Y_{i_R}$, we have $v\in Y_{i_L} \cap Y_{i_R}$. By the definition of a tree decomposition, $v\in X_i$, so $v \in X_i \cap Z_L = X_i \cap (Y_{i_L}\setminus X_i)=\emptyset$, a contradiction. Therefore $Z_L \cap Z_R=\emptyset$. This proves the pairwise disjointness of the three sets.

 Now assume that there exists an edge $e=(u,v)\in E$ such that $u\in Z_L$ and $v\in Z_R$. Then, by the definition of a tree decomposition, there exists $p \in I$ such that $\{u,v\}\subseteq X_p$. We know that $p\neq i$. So there are three possibilities: $p$ lines in the subtree rooted at $X_{i_L}$, or in the subtree rooted at $X_{i_R}$, or it is not in the subtree rooted at $X_i$. It is easy to verify that, in each of the three cases, we can find a path that connects two tree nodes both containing $u$ (or $v$) and goes through $i$, which implies $u \in X_i$ or $v\in X_i$ by the property of a tree decomposition. This contradicts our previous result. Thus there is no edge with one endpoint in $Z_L$ and another in $Z_R$. Similarly, we can prove that there exists no edge with one endpoint in $NEW$ and another in $Z_L$ or $Z_R$. This completes the proof of the lemma.
 \qed
\end{proof}

We now continue the proof of Theorem~\ref{thm:alg_treewidth}. Recall that we want to decide $\mathcal{B}_i(S_1,S_2,\ldots,S_{|X_i|})$, i.e., whether $\mathcal{I}_i$, the network with $G[Y_i]$ as the \pgraph, is connectable under some spectrum assignment $\mathcal{SA}$ such that $\mathcal{SA}(v_j)=S_j$ for all $1\leq j\leq |X_i|$ (we assume that $X_i=\{v_1,v_2,\ldots,v_{|X_i|}\}$). Note that $Y_i = NEW \cup OLD \cup Z_L \cup Z_R$. Due to Lemma~\ref{lem:disjoint}, the three subsets $NEW$, $Z_L$ and $Z_R$ can only be connected through $OLD$ (or, we can think $OLD$ as an ``intermediate'' set). Therefore, for any spectrum assignment $\mathcal{SA}$ such that $\mathcal{SA}(v_j)=S_j$ for all $j$, $\mathcal{I}_i$ is connected under $\mathcal{SA}$ if and only if the following three things simultaneously hold:

\begin{itemize}
\item $G[X_i]$ is connected under $\mathcal{SA}$;
\item $\mathcal{B}_{i_L}(S'_1,\ldots,S'_{|X_{i_L}|})$ is TRUE for some $(S'_1,\ldots,S'_{|X_{i_L}|})$ that accords with $(S_1,\ldots,S_{|X_i|})$, i.e., the two vectors coincide on any component corresponding to a vertex in $X_i \cap X_{i_L}$;
\item $\mathcal{B}_{i_R}(S'_1,\ldots,S'_{|X_{i_R}|})$ is TRUE for some $(S'_1,\ldots,S'_{|X_{i_R}|})$ that accords with $(S_1,\ldots,S_{|X_i|})$, i.e., the two vectors coincide on any component corresponding to a vertex in $X_i \cap X_{i_R}$.
\end{itemize}

The first condition above can be checked in polynomial time, and the last two conditions can be verified in $2^{O(k\cdot \tw)}n^{O(1)}$ time. Thus the time spent on determining $\mathcal{B}_i(S_1,\ldots,S_{|X_i|})$ is $2^{O(k\cdot \tw)}n^{O(1)}$. After all such terms have been computed, we can get the correct answer by checking whether there exists $(S_1,\ldots,S_{|X_r|})$ such that $\mathcal{B}_r(S_1,\ldots,S_{|X_r|})$ is TRUE, which costs another $2^{O(k\cdot \tw)}n^{O(1)}$ time. Since there are at most $O(|V|)=O(n)$ nodes in $T$, the total running time of the algorithm is $2^{O(k\cdot \tw)}n^{O(1)}=2^{O(k)}n^{O(1)}$ as $\tw=O(1)$. The proof is complete.
\qed
\end{proof}

\begin{corollary}\label{cor:fpt_tree}
The \fullprob problem on bounded treewidth graphs is fixed parameter tractable when parameterized by the number of
channels.
\end{corollary}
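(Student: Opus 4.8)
The plan is to obtain this corollary immediately from Theorem~\ref{thm:alg_treewidth}, so the proof will be very short. Recall that a parameterized problem is \emph{fixed parameter tractable} with respect to a parameter if it can be solved by an algorithm running in time $f(p)\cdot N^{O(1)}$, where $N$ is the input length, $p$ is the value of the parameter, and $f$ is an arbitrary computable function of $p$ alone (see \cite{book_para}). In our setting the parameter is $k=|C|$, the number of channels, and the input length is polynomially bounded in $n$, $k$, and the sizes of the encodings of the \pgraph and of the spectrum maps.

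First I would invoke Theorem~\ref{thm:alg_treewidth}: on any input whose \pgraph has treewidth at most some fixed constant $d$, there is an algorithm deciding the \fullprob problem in time $2^{O(k)}n^{O(1)}$. Making the dependence on $d$ explicit, this bound reads $2^{c\, d\, k}\cdot n^{O(1)}$ for the absolute constant $c$ implicit in the proof of that theorem; setting $f(k):=2^{c\, d\, k}$, which is a computable function of $k$ because $d$ is a fixed constant for the class in question, the running time takes exactly the form $f(k)\cdot n^{O(1)}$ demanded by the definition of FPT. Hence the \fullprob problem restricted to graphs of bounded treewidth is fixed parameter tractable when parameterized by $k$.

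The one point I would make explicit is that the algorithm of Theorem~\ref{thm:alg_treewidth} takes a nice tree decomposition of width $O(1)$ as part of its working data; by the result of Bodlaender \cite{bod_sjc96}, for each fixed $d$ such a decomposition (and then a nice one of the same width, cf.\ \cite{kloks94}) can be computed in time polynomial in $n$, so this preprocessing is swallowed by the $n^{O(1)}$ factor and does not affect the claim. I do not foresee any genuine obstacle here: the corollary is just Theorem~\ref{thm:alg_treewidth} read against the definition of fixed parameter tractability, and all the substantive work is contained in that theorem.
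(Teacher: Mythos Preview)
Your proposal is correct and matches the paper's treatment: the corollary is stated without proof in the paper because it follows immediately from Theorem~\ref{thm:alg_treewidth}, whose $2^{O(k)}n^{O(1)}$ bound is already of the required FPT form $f(k)\cdot n^{O(1)}$. Your additional remarks about computing the tree decomposition via Bodlaender's algorithm are accurate and simply make explicit what the paper leaves implicit.
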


\section{Conclusion and Future Work}\label{sec:con}
In this paper, we initiate a systematic study on the algorithmic complexity of
connectivity problem in cognitive radio networks through spectrum
assignment. The hardness of the problem in the general case and
several special cases are addressed, and exact algorithms are
also derived to check whether the network is connectable.

In some applications, when the given \network is not connectable, we may want to connect the largest subset of the secondary users. This optimization problem is NP-hard, since the decision version is already NP-complete on very restricted instances. Thus it is interesting to design polynomial time approximation algorithms for this optimization problem.

In some other scenarios, we may wish to connect all the secondary users but keep the \budget as low as possible. That is, we want to find the smallest $\beta$ such that there exists a \specassign\ connecting the graph in which each SU opens at most $\beta$ channels. It is easy to see that this problem generalizes the minimum-degree spanning tree problem \cite{book_npc}, which asks to find a spanning tree of a given graph in which the maximum vertex degree is minimized. The latter problem is NP-hard, but there is a polynomial time algorithm that finds a spanning tree of degree at most one more than the optimum \cite{DBLP:journals/jal/FurerR94}. It would be interesting to see whether this algorithm can be generalized to the min-budget version of our connectivity problem, or whether we can at least obtain constant factor approximations.

Another meaningful extension of this work is to design distributed algorithms to achieve network connectivity. Moreover, due to interference in wireless communications, the connected nodes using the same channel may not be able to communicate simultaneously. Therefore, it is also interesting to investigate distributed algorithms with channel assignment and link scheduling jointly considered to achieve some network objective such as connectivity and capacity maximization, especially under the realistic interference models.


\section*{Acknowledgements}
The authors would like to give thanks to Dr. Thomas Moscibroda at Microsoft
Research Asia for his introduction of the original problem. This work was
supported in part by the National Basic Research Program of China Grant
2011CBA00300, 2011CBA00302, the National Natural Science Foundation of
China Grant 61073174, 61103186, 61202360, 61033001, and 61061130540.

\bibliographystyle{abbrv}
\bibliography{connectivity}

\end{document}